\newtheorem{theorem}{\textbf{Theorem}}
\newtheorem{lemma}{\textbf{Lemma}}
\begin{document}
\title{
 Robust Resource Allocation for PD-NOMA-Based MISO Heterogeneous Networks with CoMP Technology
}

\author{Atefeh Rezaei, Paeiz Azmi, Nader Mokari and Mohammad Reza Javan
\thanks{Atefeh Rezaei, Paeiz Azmi and Nader~Mokari are with the Department of Electrical and Computer Engineering, Tarbiat
Modares University, Tehran, Iran. Mohammad~R.~Javan is with the Department of Electrical and Robotics Engineering, Shahrood University, Shahrood.}
}

\maketitle
\begin{abstract}
	In this paper, we consider a hybrid scheme of coordinated multi-point (CoMP) technology in MISO heterogeneous communication networks based on power domain non-orthogonal multiple access (PD-NOMA).  We propose two novel methods based on matching game with externalities and also successive convex approximation (SCA) to realize the hybrid scheme where the number of the cooperative nodes is variable.
	 Moreover, we propose a new matching utility function to manage the interference caused by CoMP and NOMA techniques.
 We also devise robust beamforming to cope with the channel uncertainty. In this regard, we focus on both worst-case and stochastic cases in imperfect CSI information to increase the achievable data rate.
  We provide the complexity analysis of both schemes which shows that the complexity of the stochastic approach is more than that of the worst-case method.
 Results evaluate the performance and the sensibility of our proposed methods.
\\\emph{\textbf{Index Terms--}}  CoMP technology, \,hybrid scheme, \,matching game with externalities,\,  PD-NOMA,\, SCA,\, robust beamforming,\, probabilistic constraint, \,worst-case, imperfect CSI.
\end{abstract}

 \section{INTRODUCTION}
\IEEEPARstart{T}{he} idea of heterogeneous networks (HetNets) is to bring the network access point closer to the user by which the performance of resource usage and communications is improved. HetNets consist of several types of base stations (BSs), e.g.,  macrocells and femtocells, with different capabilities, transmit powers, and coverages \cite{wang2013het}. In such a network, due to spectrum reuse and dense utilization of BSs, intercell interference is of most concern. Coordinated multipoint (CoMP) is one of the promising techniques to alleviate the effect of intercell interference. Joint Processing (JP), which can be fell into Joint Transmission (JT-CoMP) and Dynamic Point Selection (DPS), is one variety of CoMP in which data can be simultaneously available at all of the cooperative nodes. Nevertheless, this scheme has many challenges such as high signaling and latency \cite{dps} and \cite{irmer}. Since the cooperation process in high order may not be required, we offer a method which allows cooperative nodes be variable based on the conditions.
To increase the spectral efficiency, the power domain non-orthogonal multiple access (PD-NOMA) scheme is presented in which the same spectrum is shared among several users (called NOMA set).
 Each user performs successive interference cancellation (SIC) to cancel the interference from other users. Devising an efficient resource allocation schemes for PD-NOMA-based transmission could be very challenging due to the fact that determining the NOMA set, transmit power of users in the NOMA set, and the SIC ordering would be very complicated in real scenarios.
 
The performance of the resource allocation method used in the network is highly depends on the availability of the channel state information at the transmitter (CSIT). However, in most practical scenarios, such a perfect knowledge is not available due to the limited capacity of the feedback channels, high signaling, and estimation errors \cite{wang2013het,goldsmith}. In such cases, the network should adopt robust methods to cope with the imperfectness of CSIT. 

Motivated by the above mentioned facts, we aim to devise an efficient robust resource allocation scheme for MISO networks based on the CoMP and the PD-NOMA. In such scheme, finding the NOMA set, the set of BSs performing CoMP for users, transmit power variables, and robustness against channel certainty is attained by formulating the resource allocation problem into an optimization problem and solving the resulting optimization problem via efficient iterative algorithms.

\subsection{Related Work}
 
 In \cite{atefeh2016}, we consider a   homogeneous MIMO network which works based on JT scheme when uncertainty of CSIT is taken into account. In order to solve this kind of non-convex problems, we employ some approximations like Bernstein inequality and semideﬁnite relaxation (SDR).
The authors in \cite{newref2}, investigate a TDD based MIMO network considering coordinated transmission. In order to minimize the power consumption, they perform RRH activation and robustness by using the group sparse
         beamforming method and also the Bernstein type inequality.
  The authors in \cite{khaled2015}   consider a single-carrier   network and propose the stochastic difference of convex programming (SDC) algorithm.
  	In \cite{slema}, worst-case optimization constraint is rewritten as a linear matrix inequalities by the S-procedure method. 
  	 In \cite{wang2013het}, a single-carrier HetNet without any cooperation process is considered and robustness solutions in no CSI and partial CSI feedback are proposed which are based on Bernstein inequality and SDR method. 
  	 In \cite{newref}, a time-division duplex (TDD) based HetNet with hybrid analog design for MBS are proposed. To find digital beamforming vectors, a power minimization problem with outage probability constraints which are approximated by the Bernstein-type inequality is solved.
       In \cite{mlft}, a downlink dual connectivity mode of a PD-NOMA-based heterogeneous cooperative network is assumed which it aimes to provide an energy efficient system.
The authors in \cite{mlft} employ a successive convex approximation (SCA) approach with Dinkelbach algorithm.
However, in the system model of \cite{mlft}, it is assumed that the transmitter is single antenna and CSI is assumed to be perfectly known, and thus robustness challenge is not addressed. \\
Recently, matching theory has attracted a lot of attention to solve the optimization problems in 5G networks. Most of the existing methods for multi-dimensional matching problems fall into two categories \cite{2dim1}: 1) transform the multi-dimensional matching
problem into the two-dimensional matching problems as to the pairing algorithm in \cite{2dim2}; 2) construct the
hypergraph model \cite{hyper1} or k-set packing problem \cite{hyper2}.
Although there are some works which employ the matching theory, none of them investigated cooperative NOMA based network and they ignored externalities and their framework are traditional.
 In \cite{nomamatch}, a matching-theory-based user scheduling and the optimal sensing duration adaptation are proposed in an alternate iteration framework for a cognitive OFDM-NOMA
systems where externalities are ignored in the matching algorithm.  In \cite{nomamatch1}, a greed
bidirection subchannel matching scheme  without externalities is provided for
NOMA system by selecting the users who have the
maximum subchannel energy efficiency.
In \cite{Shiva}, the matching theory is used to manage the  co/cross-tier interferences between D2D and cellular communications caused by resource sharing. Hence, the matching is an effectual tool to manage  all interferences in a network and we apply this idea in our proposed framework.

  The authors in \cite{maingame},  consider an OFDMA network in the uplink transmissions case. They  employ a one-to-many matching
game theory algorithm for user association and a one-to-one matching game for channel allocation problems. Moreover, the transmission process is assumed in the  non-cooperative mode where the CSI is perfectly known. 
In this paper, we propose a new algorithm for hybrid cooperative node association via  many-to-many matching
game and sub-channel allocation in PD-NOMA-based MISO system via  many-to-one matching
game based on the general framework of  \cite{maingame}. Unlike the methods in \cite{maingame}, there are some extra-interferences due to the NOMA and  CoMP in our considered network. Moreover, we assume externalities in our matching game to insure stability of our proposed method.\\
      The authors in \cite{Li} consider a distributed network which is based on single carrier CS/CB scheme. They assume that each BS knows  the perfect CSI of all
            UEs as local CSI while only the CSI from
            the other BS is not available.
             It is proposed that
            the BSs are able to zero-force the interference.
            However, the robust beamforming in imperfect CSIT challenge and specially probabilistic robustness which is generally intractable is not addressed in \cite{Li}.  
            In Table \ref{table5}, we compare some of the works  based on the CoMP technology from the perspective of robustness and resource allocation with the considered  multiple access technology.
\begin{table*}
	\begin{center}
		\caption{Comparison of CoMP based works from the perspective of robustness and resource allocation with multiple access technology}\label{table5}
		\resizebox{17.5cm}{!}{
			\begin{tabular}{|c|c|c|c|c|c|c|c|}
				\hline 
		\textbf{\begin{tabular}{@{}c@{}}Refe-\\rences \end{tabular}} & \textbf{\begin{tabular}{@{}c@{}}Multiple Access \\ Technology \end{tabular}} & \textbf{Infrastructure} & \textbf{Variables} & \textbf{\begin{tabular}{@{}c@{}}Objective \\ Function \end{tabular}}&\textbf{\begin{tabular}{@{}c@{}c@{}}QoS and\\ Constraints \end{tabular}} &\textbf{\begin{tabular}{@{}c@{}c@{}}Robustness Strategy\\ (Imperfect CSI)  \end{tabular}}& \textbf{\begin{tabular}{@{}c@{}c@{}}CoMP\\ Scheme \end{tabular}} \\ 
	\hline
	\cite{atefeh2016}& single-carrier & \begin{tabular}{@{}c@{}c@{}}Homogeneous\\ MISO 
	\end{tabular} & \begin{tabular}{@{}c@{}} antenna beam width\end{tabular}  & \begin{tabular}{@{}c@{}}Minimizing \\ power consumption\end{tabular}  & \begin{tabular}{@{}c@{}} Probabilistic \\Received SINR\end{tabular}& \begin{tabular}{@{}c@{}}Bernstein-type\\ inequality\end{tabular} & JT   \\ 
	\hline 
				\cite{khaled2015}& single-carrier & \begin{tabular}{@{}c@{}c@{}}Homogeneous\\ MISO 
				\end{tabular} & \begin{tabular}{@{}c@{}} antenna beam width\end{tabular}  & \begin{tabular}{@{}c@{}}Minimizing \\ power consumption\end{tabular}  & \begin{tabular}{@{}c@{}}Probabilistic Received SINR\end{tabular}& \begin{tabular}{@{}c@{}}SDC\\Algorithm\end{tabular} & JT   \\ 
				\hline 
		\cite{slema}& single-carrier & \begin{tabular}{@{}c@{}c@{}}Homogeneous\\ MISO 
		\end{tabular} & \begin{tabular}{@{}c@{}} antenna beam width\end{tabular}  & \begin{tabular}{@{}c@{}}Minimizing \\ power consumption\end{tabular}  & \begin{tabular}{@{}c@{}}Probabilistic Received SINR and \\ Interference limitation \end{tabular}& \begin{tabular}{@{}c@{}}
		S-procedure  \end{tabular} & JT   \\ 
		\hline 
				 \cite{newref2}& TDD & \begin{tabular}{@{}c@{}c@{}}Heterogeneous\\ MISO 
				\end{tabular} & \begin{tabular}{@{}c@{}} antenna beam width\end{tabular}  & \begin{tabular}{@{}c@{}}Minimizing \\ power consumption\end{tabular}  & \begin{tabular}{@{}c@{}}Probabilistic Received SINR and\\
				 Interference limitation
				\end{tabular}& \begin{tabular}{@{}c@{}}Bernstein-type \\inequality\end{tabular} & Hybrid   \\ 
		\hline  
		\cite{Li}& Single-carrier & \begin{tabular}{@{}c@{}c@{}}Homogeneous\\ MISO
		\end{tabular} & \begin{tabular}{@{}c@{}} antenna beam width\end{tabular}  & \begin{tabular}{@{}c@{}}Maximizing \\ throughput \end{tabular}  & \begin{tabular}{@{}c@{}}Achievable data rate and\\
		 Interference limitation
		\end{tabular}& \begin{tabular}{@{}c@{}}Non- \\robust\end{tabular} & CS/CB   \\
		\hline
				 \cite{mlft}& PD-NOMA & \begin{tabular}{@{}c@{}c@{}}Heterogeneous\\ SISO 
				\end{tabular} & \begin{tabular}{@{}c@{}} radio resource allocation
				 \end{tabular}  & \begin{tabular}{@{}c@{}}Maximizing \\ throughput\end{tabular}  & \begin{tabular}{@{}c@{}} achievable data rate 
				\end{tabular}& \begin{tabular}{@{}c@{}}Non-\\ robust\end{tabular} & Hybrid   \\ 
				\hline 
				\hline 
				\hline
				\textbf{Our Work}&\textbf{ \begin{tabular}{@{}c@{}} PD-NOMA\\\end{tabular}} & \textbf{\begin{tabular}{@{}c@{}} Heterogeneous  \\MISO\end{tabular}} & \textbf{\begin{tabular}{@{}c@{}c@{}}Joint radio resource\\ allocation and beamforming \end{tabular}} & \textbf{\begin{tabular}{@{}c@{}c@{}}Maximizing\\ throughput \end{tabular}} & \textbf{\begin{tabular}{@{}c@{}}Probabilistic achievable data rate\\ and interferance management \end{tabular}}& \textbf{\begin{tabular}{@{}c@{}} D.C. approximation with\\ Euclidean uncertainty set and \\Bernstein-type inequality \end{tabular}}& \textbf{Hybrid}  \\ 
				\hline
		\end{tabular}}
	\end{center}
\end{table*}
\subsection{Contributions}
Since the 5G network employes some techniques which lead to some design challenges as extra-interferences and unnecessary cooperations, we aim to model a practical and flexible network based on these techniques and attempt to propose some methods for use of advanced 5G networks.
As far as we
find out, there is no comprehensive work which consider the joint BS association and channel allocation in cooperative NOMA systems especially in the MIMO case with imperfect CSIT effect.
 
The main contributions and features of this paper can be summarized as follows:
 \begin{itemize}
 	\item 
 	\textbf{CoMP Scheme:} To eliminate unnecessary cooperations, we consider a hybrid scheme where  different antennas, as transmission nodes, necessarily are not in a fixed cooperative set (CS). In this regard, we propose a novel method based on matching games. Generally, unlike other existing works in CoMP design, we assume each antenna of a FBS can join to different CSs.

\item 
\textbf{Architecture of Network:}
B‌ased on our researches, all of the mentioned papers in the related works section which jointly investigate robustness of the CoMP and MIMO networks discussed a single-carrier network or they consider SISO multicarrier networks with perfect CSI. In this paper, we assume a cooperative network in multi-carrier conditions considering uncertainty of the CSIT. This problem has  a three dimensional  matching concept. Hence, we propose a new matching algorithm with a low complexity. Since multiplexing of
multiple users on the same frequency channel leads to co-channel interference (CCI), SIC must be performed at the receivers. In this regard, we introduce a new probabilistic SIC constraint which is strongly intractable.
 \item 
 	\textbf{ Advanced Interference Management:} In order to remove the extra-interference due to SIC and CoMP, we propose a novel approach to apply an advanced interference management method based on the matching utility functions. We pair the cooperative nodes in the CoMP set and users which are multiplexed on the same subcarriers in such a way to reduce the interference with harmful effect on other users. Moreover, to achieve a stable solution in practical networks, we consider externalities and employ swap-matching idea.
\item
 \textbf{Robustness Method:} We consider robustness in both stochastic and deterministic cases. The considered scenario is more intractable due to probabilistic SIC constraint. To solve the proposed robust optimization problem, a novel alternative sequential algorithm is proposed. Moreover, the convergences  of the iterative algorithms are proved and their computational complexities are
investigated.
\end{itemize}
\subsection{Organization}
The rest of this paper is organized as follows:
$  $
  The considered system model is presented in Section \ref{secsys}. The proposed resource allocation problem is formulated in Section \ref{problem}. In Section \ref{MATCHING}, a new matching game based solution is proposed. We investigate the convergence and the computational complexity of the proposed methods in Section \ref{seccomputation}. Simulation results are in Section \ref{secsimulation}, and the paper is concluded in Section \ref{secconclusion}.

Notations: We use $ \circ $ to define Hadamard product of two vectors while $ * $ represents the common matrix multiplication. $ \langle \textbf{a},\textbf{b}\rangle $ represents inner product of two matrices and $\textbf{ A}\succeq 0 $ indicates that $ \textbf{A} $ is a positive semidefinite (PSD) matrix. In addition, $ \|\textbf{.}\|_F $ and $ \|\textbf{.}\| $ denote Frobenius norm of the matrix and Euclidean norm of a vector, respectively. Trace of a matrix is defined via $ \text{trace}\;[\textbf{A}] $. The conjugate transpose of a matrix $ \textbf{A} $ is denoted by $ \textbf{A}^H $. The complex
space of $n$-dimensional vectors is described using $ \mathbb{C}^{n}$. $ \lambda_{\text{max}}\:(.) $ denotes the maximum eigenvalue of matrix. $ \text{Re}\:\{.\} $ and $ \mathbb{E}\:\{.\} $ are the real part and mean of associated argument. The maximum number of linearly independent row vectors in the matrix is shown using $ \text{rank}\:(.)$. The expression $ \mathcal{A} \subset \mathcal{B} $ defines $\mathcal{A}$ as a subset of the set $\mathcal{B}$, and also
$\cup $ denotes the union of two sets.
$\{\mathcal{A}\backslash (a)\}$ equals a subset contains all of the elements of set $\mathcal{A}$ except element $a$.
\section{SYSTEM MODEL}\label{secsys}
  We assume a heterogeneous network with one MBS and several FBSs while the spectrum of the MBS is shared with all of the FBSs which are operating inside the coverage of the corresponding MBS. Hence, the interference effect of macrocell is taken into account. Consequently, we consider a HetNet presented at Fig. \ref{fig:Drawing2} with $ F $ FBSs in a set as  $ \mathcal{F} =\{1, ... , F\} $ where each of them is  equipped with $T_{f}$ antennas and a MBS with $T_{m}$ antennas. 
We define the set of femtocell users (FUEs) as $ \mathcal{K} =\{1, ... , K\} $ and  the set of  available subcarriers in the FBSs as $  \mathcal{N} =\{1, ... , N\} $.
We assume that
$\hat{\textbf{h}}_{\mathcal{F}_{k,\: n}}^f \in \mathbb{C}^{T_f}$
 and $ \hat{\textbf{w}}_{k,\: n}^f \in \mathbb{C}^{T_f} $ display  channel coefficient and beamformer vector of the $f^{th}$ FBS to the $ k^{th} $ FUE at the $ n^{th} $ subcarrier.  In this manuscript, we just focus on determining the beamforming vector of FBSs to decreas the malicious effect of cooperative femtocells on both of the FUE and MUE. We suppose that $ \hat{\textbf{h}}_{\mathcal{MF}_{n}}^f \in \mathbb{C}^{T_f} $  displays the channel coefficients from the $f^{th}$  FBS  to the MUE over subcarrier  $ n $  and 
$ \textbf{h}_{\mathcal{FM}_{k,\: n}} \in \mathbb{C}^{T_m} $ are channel coefficients of  MBS  to the $ k^{th} $ FUE over  subcarrier $ n $. Further, we introduce $ \textbf{m}_{n} \in \mathbb{C}^{T_m} $ as beamformer vector of MBS at the $ n^{th} $ subcarrier.  In general case, we introduce 
\begin{align}
 \textbf{h}_{\mathcal{F}_{k,\: n}} =[(\hat{\textbf{h}}_{\mathcal{F}_{k,\: n}}^1)^{T},\: ..., \: (\hat{\textbf{h}}_{\mathcal{F}_{k,\: n}}^f)^T, \: ..., \: (\hat{\textbf{h}}_{\mathcal{F}_{k,\: n}}^F)^T]^T \in \mathbb{C}^{F T_f},\label{eq:ch1}
\end{align} 
as channel coefficients from all of the antennas as transmission nodes  to the $ k^{th} $ FUE over the $ n^{th} $ subcarrier. 
\begin{figure}
\centering
\includegraphics[width=0.85\linewidth]{./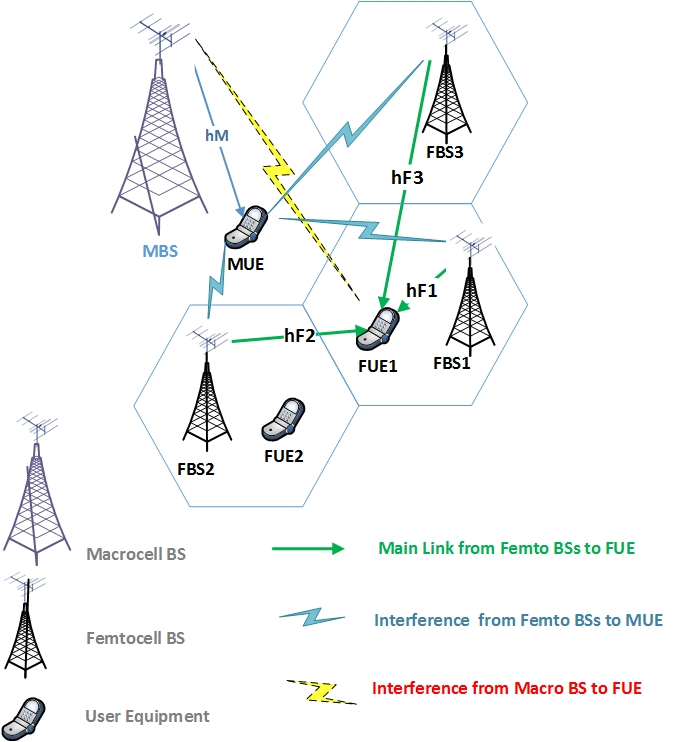}
\caption{HetNet Based on CoMP Technology}\label{fig:Drawing2}
\end{figure}
  In this network, we assume that all of the chosen coordinated  transmission nodes transfer data to the user over  subcarrier $ n $. It is noticeable that each element of this vector is  an antenna which is defined as an independent transmission node $a$. Therefore, 
in order to make macro diversity by the CoMP procedure, it is necessary that at least some parts of the femtocell's bandwidth be same and the user with critical condition can be served by the cooperate nodes on common subcarrier.
As multiple users can employ subcarrier $n$,  each
user $k$ adopts the SIC technique \cite{rev3}.
Consequently, SINR of FUE $ k$ on the $ n^{th} $ subcarrier is given by
\begin{align}
\Gamma_{(\textbf{w}_{k,\:n},\ \boldsymbol{\rho}_{k,\: n},\ \textbf{h}_{\mathcal{F}_{k,\: n}},\: \textbf{h}_{\mathcal{FM}_{k,\ n}})}=\frac{|\textbf{h}_{\mathcal{F}_{k,\: n}}^{H} (\textbf{w}_{k,\: n}\circ\boldsymbol{\rho}_{k,\: n})|^{2}}{I_{\mathcal{FM}_{k,\: n}}+I_{\mathcal{F}_{k,\: n}}+\sigma_{k,n}^{2}} ,\label{eq:1}
\end{align}
where $I_{\mathcal{FM}_{k,\: n}}=|\textbf{h}_{\mathcal{FM}_{k,\: n}}^{H} \textbf{m}_{n}|^{2}$ is the interference from MBS and $I_{\mathcal{F}_{k,\: n}}=\Sigma_{i\in \mathcal{K},\|\textbf{h}_{\mathcal{F}_{i,\: n}}\|> \|\textbf{h}_{\mathcal{F}_{k,\: n}} \|}\\|\textbf{h}_{\mathcal{F}_{k,\: n}}^{H} (\textbf{w}_{i,\: n}\circ\boldsymbol{\rho}_{i,\: n})|^{2}$ expresses the summation of inter-cell interference of other FBSs and intra-cell interference due to the PD-NOMA approach. 
We define $ \boldsymbol{\rho}_{k,\: n}=[(\hat{\boldsymbol{\rho}}_{k,\: n}^1)^T, ...,  (\hat{\boldsymbol{\rho}}_{k,\: n}^F)^T]^T=[\rho_{k,\: n}^{a}] \in \mathbb{Z}^{F T_f}$ as a multi-dimensional matrix where $ \rho_{k,\: n}^{a} $ is an integer variable. If  node $ a $ can transmit data of user $ k $ over subcarrier $ n $, $ \rho_{k,\: n}^{a}=1 $, and thus, if transmission for user $ k $ can not be performed over sucarrier $ n $ through node $ a $ for any reason, $ \rho_{k,\: n}^{a}=0 $.
$ \textbf{w}_{k,\: n}=[(\hat{\textbf{w}}_{k,\: n}^1)^T, ..., (\hat{\textbf{w}}_{k,\: n}^F)^T]^T\in \mathbb{C}^{F T_f} $ and $ \sigma_{k,n}^2 $ is the noise power. $ \hat{\boldsymbol{\rho}}_{k,\: n}^f
\in \mathbb{C}^{T_f}$  is the subcarrier indicator vector of the $f^{th}$ femtocell.  Accordingly, the achievable data rate at user $ k $ over subcarrier $ n $ is formulated by
\begin{align}
  r_{k,n}=\log_2(1+\Gamma_{(\textbf{w}_{k,\:n},\ \boldsymbol{\rho}_{k,\: n},\ \textbf{h}_{\mathcal{F}_{k,\: n}},\: \textbf{h}_{\mathcal{FM}_{k,\ n}})}) \label{rate} .
 \end{align}
\subsection{Channel State Information and Robustness}					
 Typically, by transmitting pilot symbols in the downlink transmission and estimating channels at the receiver sides, the estimated CSI can be feedbacked from the receiver to FBSs. 
 In this paper, we assume that the feedback links from the receiver to the transmitter are error-free and the estimated CSI at receiver is imperfect.
 To model the uncertainty of CSI, we choose additive
  error model as follows:
  \begin{align}\label{additive}
  \textbf{h}_{\mathcal{F}_{k,\: n}}=\bar{\textbf{h}}_{\mathcal{F}_{k,\: n}} + \textbf{e}_{\mathcal{F}_{k,\: n}},
  \end{align}
  where the notations $ \bar{\textbf{h}}_{\mathcal{F}_{k,\: n}} $ and $ \textbf{e}_{\mathcal{F}_{k,\: n}} $ denote the estimated channel coefficients at the receiver and the error vector  \cite{csidefinition,Caire}.
   We consider two schemes of CSI imperfection as follows: 
      \begin{itemize}
     \item In the worst-case, we assume $ \textbf{e}_{\mathcal{F}_{k,\: n}} $ is a norm bounded vector for analytical convenience. In this regard, we consider
    the Euclidean ball-shaped uncertainty set as follows:
    \begin{align}\label{ellipsoid}
     \mathcal{H}=\{\textbf{h}_{\mathcal{F}_{k,\: n}}:\textbf{h}_{\mathcal{F}_{k,\: n}}=\bar{\textbf{h}}_{\mathcal{F}_{k,\: n}}+\bar{\textbf{e}}_{\mathcal{F}_{k,\: n}},\| \bar{\textbf{e}}_{\mathcal{F}_{k,\: n}}\| \leq \zeta_{k,n}\}, 
    \end{align}
    where $ \zeta_{k,n} $ defines the error bound on the uncertainty region.
     $ \bar{\textbf{h}}_{\mathcal{F}_{k,\: n}} $ is defined as  the estimation coefficients of channels which is the center of the assumed ball \cite{atefeh2016},  \cite{csidefinition}, \cite{gharavol}. \item In the stochastic imperfect case, we assume slow fading channels while the error vectors follow the complex Gaussian distribution as \cite{atefeh2016}, \cite{bernstein2},  \cite{lowcomplexity}. Hence, $ \textbf{e}_{\mathcal{F}_{k,\: n}} $ is a complex Gaussian random vector with specific fixed mean and covariance matrix as
    \begin{align}\label{bbb}
    \textbf{e}_{\mathcal{F}_{k,\: n}}=\textbf{C}^{1/2}_{e,\mathcal{F}_{k,n}}\textbf{v}_{\mathcal{F}_{k,\: n}},
    \end{align}
    where $\textbf{C}_{e,\mathcal{F}_{f,k,n}}\succeq 0 $ is the covariance matrix of $ \textbf{e}_{\mathcal{F}_{k,\: n}} $, $ \textbf{v}_{\mathcal{F}_{k,\: n}} $ is the complex Gaussian random vector, i.e., $ \textbf{v}_{\mathcal{F}_{k,\: n}} \sim \mathcal{N} (\textbf{0},\textbf{I})$, and $ \textbf{I} $ is an identity matrix.
     \end{itemize}

 \begin{table*}
   		\caption{NOTATIONS}
   	\label{table-00}
 \begin{center}
 \resizebox{15cm}{!}{
 \renewcommand{\arraystretch}{2}
  	\begin{tabular}{| c| l|l|l|}
  	
  		\hline 
  		Notation& Description& $\bar{\textbf{h}}_{\mathcal{F}_{k,\: n}}$ & Estimation of $  \textbf{h}_{\mathcal{F}_{k,\: n}} $    \\
  		\hline
  		
  	  $\hat{\textbf{h}}_{\mathcal{F}_{k,\: n}}^f$& Channel vector from FBS $f$ to user $ k $ over subcarrier $ n $& $\bar{\textbf{h}}_{\mathcal{MF}_{ n}}$ & Estimation of $  \textbf{h}_{\mathcal{MF}_{ n}} $ \\
  	  \hline 
  	  
  	 $\hat{\textbf{h}}_{\mathcal{MF}_{n}}^f$& 	Channel vector from FBS $f$ to MUE over subcarrier $ n $&$\bar{\textbf{h}}_{\mathcal{FM}_{k,\: n}}$ & Estimation of $  \textbf{h}_{\mathcal{FM}_{k,\: n}} $\\\hline
  		 $\textbf{h}_{\mathcal{FM}_{k,\: n}}$& Channel vector from MBS to user $ k $ over subcarrier $ n $&$\textbf{e}_{k,n}$ & Error vector on $ \textbf{h}_{\mathcal{F}_{k,\: n}} $\\\hline
  		$ q_{\text{max}}$ & Maximum number of FUEs with common subcarrier $n$  & $ \textbf{z}_{k,n} $ & Error vector on $ \textbf{h}_{\mathcal{FM}_{k,\: n}} $ \\\hline
  		$ \epsilon$ & Stopping criterion accuracy for the CTNSA Algorithm &$\textbf{q}_{k,n}$ & Error vector on $ \textbf{h}_{\mathcal{MF}_{k,\: n}} $\\\hline
 	
  		 $\hat{\textbf{w}}_{k,\: n}^f$& Beamformer vector of FBS $f$ for user $ k $ over subcarrier $ n $ &$\zeta_k$ & Error bound related to $ \textbf{e}_{k,n} $ \\\hline
  		 	 
  		 		 $\textbf{m}_{\mathcal{FM}_{k,\: n}}$&  Beamformer vector of MBS for user $ k $ over subcarrier $ n $&$\kappa_k$ & Error bound related to $ \textbf{z}_{k,n} $ \\\hline
  		 	$ \varUpsilon_{CA}$ & Weighting parameter in CA phase&$\eta_k$ & Error bound related to $ \textbf{q}_{k,n} $ \\\hline
  		 		$\hat{\boldsymbol{\rho}}_{k,\: n}^f$& Subcarrier indicator  vector of FBS $f$ for user $ k $ on subcarrier $ n $ &$\textbf{C}_{e,\mathcal{F}_{f,k,n}}$ & Covariance matrix of $ \textbf{e}_{\mathcal{F}_{k,\: n}} $\\\hline
  		 		 		 	$ c_{\mathcal{FM}}^n$ & Cost unit of interference at MUE &$ R_{k} $ & Target rate at user $ k $\\\hline
  	$\beta, \alpha $ & Maximum tolerable outage
  	  	  		 for $ r_{k,n} $ and interference & $ \mathcal{K}  $ & Set of total FUEs  \\\hline
  		$\mathcal{H}_{\mathcal{F}_{k,\: n}}$ & Channel uncertainty set of $ \textbf{h}_{\mathcal{F}_{k,\: n}} $   &$  \phi $ & Utility function  \\\hline
  		$\mathcal{H}_{\mathcal{MF}_{ n}}$ & Channel uncertainty set of $ \textbf{h}_{\mathcal{MF}_{ n}} $   &	$\epsilon_M$ & Maximum interference power\\\hline
  		$\mathcal{H}_{\mathcal{FM}_{ k, n}}$ & Channel uncertainty set of $ \textbf{h}_{\mathcal{FM}_{k,\: n}} $  &$P_\text{max}$ & Maximum transmit power of FBS \\\hline
  		 $ \mathcal{A}_k   $ &  Set of transmission nodes assigned to FUE $k$
  		& $\mu$ & Mapping function \\\hline
             $  T_f $ &  Number of antennas in FBS
  		& $ T_m$ &  Number of antennas in MBS \\\hline
$ \boldsymbol{\chi}  $ &  Transmission node selection indicator matrix
  		& $  \mathcal{N} $ &  Set of FBSs in the network\\\hline
$F_\text{max}$ & Maximum number of cooperative FBSs
  		& $ \mathcal{N}$ &  Set of total available channels \\\hline
 $ \boldsymbol{\nu} $ & Channel allocation indicator matrix
  		& $\mathcal{A} $ &  Set of all the transmission nodes \\\hline
$  \mathcal{N}_a $ &  Set of available subcarrier at node $a$
  		& $ \mathcal{K}_a $ & Set of FUEs associated to $a$ \\\hline
$ \hat{N}_a  $ & Maximum FUEs serving by each antenna 
  		& $\textbf{h}_{\mathcal{F}_{k,\: n}}$ & $[\hat{\textbf{h}}_{\mathcal{F}_{k,\: n}}^f]$  \\\hline
$r_{k,n} $ & Data rate at user $ k $ on subcarrier $ n $
  		& $\textbf{h}_{\mathcal{MF}_{ n}}$ &$ [ \hat{\textbf{h}}_{\mathcal{MF}_{n}}^f] $  \\\hline
$  \varUpsilon_{CS} $ & Weighting parameter in CS phase 
  		& $\textbf{w}_{k,\: n}$ & $ [\hat{\textbf{w}}_{k,\: n}^f] $    \\\hline
$  c_i^n $ &  Cost unit of interference at FUE $i$ 
  		& $\boldsymbol{\rho}_{k,\: n}$ & $ [\hat{\boldsymbol{\rho}}_{k,\: n}^f] $ \\\hline
  	\end{tabular}
  	}
  \end{center}
  \end{table*}
\section{PROBLEM FORMULATION}\label{problem}
The beamformer vector of transmitters must be designed based on the channel models such that guarantees the outage occurs below a small predetermined probability threshold as follows:
1) We define achievable data rate based on \eqref{eq:1}  considering uncertainty sets similar to \eqref{ellipsoid} in a worst-case approach as
$ r_{(\bar{\textbf{h}}_{\mathcal{F}_{k,\: n}},\textbf{W}_{k,\:n},\ \boldsymbol{\rho}_{k,\: n})} \geq R_{k} $
where channel coefficients must be considered in uncertainty set $ \mathcal{H} $ and $ R_k $ is the minimum required achievable data rate.
 2) In probabilistic case, we define achievable data rate  considering imperfect CSI as
\begin{align}
 \Pr {\bigg \{ } \log_2(1+\Gamma_{(\textbf{W}_{k,\:n},\ \boldsymbol{\rho}_{k,\: n})} )\geq R_{k} {\bigg \} } \geq 1-\beta,\label{eq:2b}
\end{align}
where $ R_{k}$ and $ \beta $ respectively denote the target rate for FUE $ k $ and the maximum tolerable outage
probabilities.  In downlink transmission of NOMA based systems, each user equipment can successfully detect and cancel the interference from all users with lower channel gains whereas the interference of users with higher channel gains must be considered at the desired signal. 
 In order to ensure successive SIC procedure at each user, there is an information theory constraint \cite{rev3,cui2016} as follows: 
\begin{align}
 &\Pr {\bigg \{ } \Gamma_{(\textbf{w}_{k,\:n},\ \boldsymbol{\rho}_{k,\: n},\ \textbf{h}_{\mathcal{F}_{j,\: n}},\: \textbf{h}_{\mathcal{FM}_{j,\:n}})} \geq \nonumber \\
  &
  \Gamma_{(\textbf{w}_{k,\:n},\ \boldsymbol{\rho}_{k,\: n},\ \textbf{h}_{\mathcal{F}_{k,\: n}},\: \textbf{h}_{\mathcal{FM}_{k,\:n}})} {\bigg \} }\nonumber \\
 &  \geq 1-\beta \:\:\: \forall j,k\in \mathcal{K}, \|\textbf{h}_{\mathcal{F}_{j,\: n}}\|> \|\textbf{h}_{\mathcal{F}_{k,\: n}} \| .\label{sic}
\end{align}
  This constraint expresses the SINR of user $ k $ at users with higher channel qualities must be higher than the SINR of user $ k $ at itself.
In the more general concept, we assume that CS which includes different antennas in various FBSs is variable and not necessarily all of these nodes are in the CS. Therefore, the nodes related to a CS could be determined by solving an optimization problem. This problem has three dimensional matching concept which is hard to solve.   
 To address this problem, we define the antennas as transmission nodes in a new set as $\mathcal{A}=\{1,...,F*T_f\}$. To solve this problem,  we transform it into two two-dimensional matching problems. Hence, we define some new integer variables as $\chi_{k,a} $ and $\nu_{k,n}$ so that $\rho_{k,\: n}^{a}= \chi_{k,a}*\nu_{k,n}$. The variable $\chi_{k,a} $ denotes that node $a$ is assigned  to FUE $k$ or not, and $\nu_{k,n}=1$ denotes the set of cooperative nodes can use subcarrier $n$ to FUE $k$. 
 The optimization problem for maximizing the total throughput of the considered network is described as follows:
\begin{subequations} \label{eq:2} 
     \begin{align}
      & \underset{\textbf{W} , \boldsymbol{\chi}, \boldsymbol{\nu}}{\text{max}} \quad  \Sigma_{n\in \mathcal{N}}\Sigma_{k\in \mathcal{K}}r_{k,n},  \label{eq:2a}  \\
      &\text{s.t.}\hspace{0.3cm}\eqref{eq:2b}, \eqref{sic},  \nonumber \\
              & \hspace{1cm} \Pr {\bigg \{ } \Sigma_{k\in \mathcal{K}}  |\textbf{h}_{\mathcal{MF}_{n}}^{H}(\textbf{w}_{k,\: n}\circ\boldsymbol{\rho}_{k,\: n})|^2 \leq \epsilon_M {\bigg \} } \geq 1-\alpha, \label{eq:2c} \\
                 &\hspace{1cm} 0 < \Sigma_{n\in \mathcal{N}}\Sigma_{k\in \mathcal{K}}\|(\hat{\textbf{w}}^f_{k,\: n}\circ\hat{\boldsymbol{\rho}}^f_{k,\: n})\|^{2}\leq P_\text{max}, \forall  f\in \mathcal{F} \label{eq:2d} \\
 &\hspace{1cm} \Sigma_{a\in \mathcal{A}}\chi_{k,a}\leq F_\text{max}, \forall k\in \mathcal{K}, \label{eq:2e1} \\
&\hspace{1cm} \Sigma_{k\in \mathcal{K}}\chi_{k,a}\leq \hat{N}_a, \forall a\in \mathcal{A}, \label{eq:2e2} \\
              &\hspace{1cm} \Sigma_{k\in \mathcal{K}}\nu_{k,n}\leq q_\text{max}, \forall n\in \mathcal{N}, \label{eq:2e3} \\
&\hspace{1cm} \Sigma_{n\in \mathcal{N}}\nu_{k,n}\leq 1, \forall k\in \mathcal{K}, \label{eq:2e4} \\
         %     & \rho_{k,\: n}^f\in {\bigg \{ } 0, 1 {\bigg \}},
             & \hspace{1cm} \chi_{k,a},\:\nu_{k,n}\in
             \{0, 1\},
                             \forall k\in \mathcal{K}, n\in \mathcal{N}, a\in \mathcal{A},                 
              \label{eq:2f} 
     \end{align}
\end{subequations} 
where all of the beamformer vectors of FBSs are considered in a matrix as \textbf{W}, and also matrix $ \boldsymbol{\rho} $ defines a three dimensional matrix for all subcarrier indicators so that $\boldsymbol{\rho}=\boldsymbol{\chi}*\boldsymbol{\nu}$.
Practically, there are some extra-effects of FBSs on the macrocell \cite{wang2013het}.
In the CoMP network,  these effects are not ignorable.
 It is intelligent to assume an interference power constraint
on the MUE.
 In this regard, 
 \eqref{eq:2c} is considered to improve the overall performance of the network where $\epsilon_M$  is the preset target value and $\alpha$ denotes the maximum tolerable outage
probabilities for interference power constraints. 
Due to the femtocell hardware limitations, the number of FUEs associated to each
FBS is restricted \cite{femto8}. We limit this number of UEs with \eqref{eq:2e2} where $\hat{N}_a$ is the maximum number of UEs that can be served by each antenna as in \cite{maingame}.
 Practically, the transmit power of each FBS is a function of beamforming coefficient of antenna $ a $ when subcarrier $ n $ is assigned to the intended antenna which is in the serving cooperative set of user $ k $. Note also that we assume the transmit signal power equals to 1.
  Hence, we assume constraint ($ 9c $) as a limitation on the discriptive equivalent of the total transmit power of each FBS, where $P_\text{max}$ is defined as an upper bound.
 To manage signaling volume and delays, we introduce a complexity constraint as \eqref{eq:2e1} which limits the total number of coordinated transmission nodes with the maximum order of cooperation $F_{\text{max}}$. In \eqref{eq:2e3} and \eqref{eq:2e4}, we assume that spectrum can be shared between MBS and FBSs while each subcarrier can be assigned to a set of users and every user can be served  by one subcarrier and $q_{\text{max}}$ denotes the maximum number of interfered users through the PD-NOMA technique.
\section{MATCHING GAME BASED RESOURCE ALLOCATION}\label{MATCHING}
Optimization problem described in \eqref{eq:2} is a probabilistic mixed-integer nonlinear programming (MINLP) problem. Generally, we tend to perform a joint transmission nodes assignment and subcarrier allocation with beamforming. To solve \eqref{eq:2}, we choose an iterative based framework which has three independent phases:  1)  Given an initialized beamforming vectors, we propose a new algorithm based on many-to-many matching game which perform user association. The output of this phase introduces as $\boldsymbol{\chi}^{*}$.   
2) Given $\boldsymbol{\chi}^{*}$ and $\boldsymbol{W}^{*}$, the subcarrier assignment problem can be solved by a  many-to-one matching game algorithm where the output of this phase introduces $\boldsymbol{\nu}^{*}$.   
3) At third phase, beamforming design based on the proposed association at previous phases is performed.
These phases  are sequentially applied until the problem converges to a feasible solution of $\{\boldsymbol{\chi}^{*}, \boldsymbol{\nu}^{*},\boldsymbol{W}^{*} \}$.

\subsection{Cooperative Transmission Node Selection (CTNSA) Algorithm}

Since the clustering is an important issue in the MIMO-5G networks, we investigate it in a general problem as follows:
\begin{subequations} \label{CTNSA} 
     \begin{align}
& \mathcal{P}_{\text{CTNSA}}: \hspace{0.25cm}  \underset{ \boldsymbol{\chi}}{\text{max}} \quad  \Sigma_{n\in \mathcal{N}}\Sigma_{k\in \mathcal{K}}r_{k,n},  \label{eq:2aa}  \\
      & \hspace{1.9cm} \text{s.t.}\hspace{0.3cm}\eqref{eq:2e1}, \eqref{eq:2e2}, \eqref{eq:2f}. \nonumber         
     \end{align}
\end{subequations} 
 As we mentioned before, we use 
\eqref{eq:2e1} and \eqref{eq:2e2} for this algorithm which works based on many-to-many matching concept. 
By the definition of the different antennas as transmission nodes, $\chi_{k,a}$ defines node $a$ is joined to the cooperative set of user $k$ or not. In this problem, we describe a CS with $\mathcal{A}_k\subset \mathcal{A}$ as the set of transmission nodes assigned to FUE $k$ and also  $ \mathcal{K}_a \subset \mathcal{K} $ as the set of users associated to node $a$. Additionally,
 we define $\mathcal{N}_a \subset \mathcal{N}$ as the set of available subcarriers at node $a$ while the set of FUEs multiplexing on subcarrier $n$ is denoted by $ \mathcal{C}_n \subset \mathcal{K} $. 
\subsubsection{Definition of a matching function}
Optimization problem $ \mathcal{P}_{\text{CTNSA}}$  can be defined by a tuple ($\mathcal{A}, \mathcal{K},  \succ_{\mathcal{A},CS}, \succ_{\mathcal{K},CS}$). Here, $\succ_{\mathcal{A},CS}$ and $\succ_{\mathcal{K},CS}$ denote the sets of the preference relations of FUEs and transmission nodes, respectively.  We define two disjoint finite sets of players $\mathcal{A}$ and $\mathcal{K}$, and also a mapping function $\mu_{CS}$  such that:
1) $a\in \mu_{CS}(k) \longleftrightarrow k\in \mu_{CS}(a)$; 
2) $|\mu_{CS}(k)|\leq F_{\text{max}}$ and $|\mu_{CS}(a)|\leq \hat{N}_a $.
Instead of \eqref{eq:2b} and \eqref{eq:2c}, we introduce some utility functions to each component of players until they construct their preference list in a decreasing order. We use the average received SINR over all subchannels as utility function of the FUE which is the
most common criterion for user association \cite{maingame}, \cite{uacriterion} as follows:
\begin{equation} \label{ukcs} 
     \begin{aligned} 
& \varphi_{CS}^k(a)=\log_2(1+\Sigma_{n\in \mathcal{N}_a} \gamma_{k,n}^a).
     \end{aligned}
\end{equation} 
In this case, $\gamma_{k,n}^a$ describes independent effect of each node $a$ on the received SINR as
\begin{align}\label{isinra} 
\gamma_{k,n}^a=\frac{\chi_{k,a}\nu_{k,n}|\bar{h}_{\mathcal{F}_{k,\: n}}^{a} \hat{w}_{k,\: n}^a|^{2}}{I_{\mathcal{FM}_{k,\: n}}+I_{\mathcal{F}_{k,\: n}}+\sigma_{k,n}^{2}} , \forall k\in \mathcal{K}_a, n\in \mathcal{N}_a, 
\end{align}
where the interference\\ $I_{\mathcal{F}_{k,\: n}}=\Sigma_{i\in \mathcal{K},|\bar{h}_{\mathcal{F}_{i,\: n}}^a|^2> |\bar{h}_{\mathcal{F}_{k,\: n}}^a|^2}|(\Sigma_{a\in \mathcal{A}}\chi_{i,a}\nu_{i,n}\bar{h}_{\mathcal{F}_{k,\: n}}^{a} \hat{w}_{i,\: n}^a)|^{2}$.  \\
In \cite{maingame}, a new function is introduced as a utility function for user association at uplink transmission of HetNets. we implement the same function for our problem as
 \begin{equation} \label{uacs} 
     \begin{aligned} 
& \varphi_{CS}^a(k)=\varUpsilon_{CS} \Sigma_{n\in \mathcal{N}}\frac{|\bar{h}_{\mathcal{F}_{k,n}}^a\hat{w}_{k,n}^a|^2}{2^{R_k}-1}-\varTheta_{k,a},
     \end{aligned}
\end{equation}
where $\varUpsilon_{CS}$ is a weighting parameter capturing the average direct channel gain from node $a$ to the FUE. 
Although the CoMP can improve the received signal of users, it causes significant interference in other receivers, i.e., MUE and FUE $i$ ($\forall i\neq k$) especially in the case of unsuccessful SIC procedure. 
 Therefore, we propose an advanced interference management through introducing
 $\varTheta_{k,a}$ which
quantifies the aggregated interference that node $a$
causes to the MUE and also the other FUE $i$ on
all subchannels which is defined as $\varTheta_{k,a}=\Sigma_{n\in \mathcal{N}_a}( \varTheta^n_{\mathcal{MF}_{k,a}}+ \varTheta^n_{\mathcal{F}_{k,a}})$ where
\begin{subequations}
     \begin{align}
&
\varTheta^n_{\mathcal{MF}_{k,a}}=c^n_{\mathcal{MF}}\varpi^n|\bar{h}^a_{\mathcal{MF}_{n}}\hat{w}_{k,n}^a|^2,
 \label{uacs1}\\ 
& \varTheta^n_{\mathcal{F}_{k,a}}=\Sigma_{i\in \mathcal{K}_a\backslash \{k\}, |\bar{h}^a_{\mathcal{F}_{i,\: n}}|\geq|\bar{h}^a_{\mathcal{F}_{k,\: n}}|}c_i^n|\bar{h}^a_{\mathcal{F}_{i,n}}\hat{w}_{k,n}^a|^2. \label{uacs2} 
     \end{align}
\end{subequations}
\eqref{uacs1} and \eqref{uacs2} decrease the interference of node $a$ on MUE and other FUEs, respectively.\\
 $\varpi^n=\text{max}\big(0,  (\Sigma_{i\in \mathcal{K}_a}|\bar{h}^a_{\mathcal{MF}_{n}}\hat{w}_{k,n}^a|^2-\epsilon_M)/\epsilon_M\big)$ is defined to quantify
the degree of violation of the constraint \eqref{eq:2c}. 
$c^n_{\mathcal{MF}} $ and $c_i^n$ are the costs per unit of
the interference power at the MUE and FUE $i$, respectively.
The proportions of 
$c^n_{\mathcal{MF}}$ and $c_i^n$ can be set, based on the priority of users.
For example,
 $c^n_{\mathcal{MF}}\gg c_i^n$ indicates the priority of MUEs and guarantees that solutions with harmful effect on MUEs can be blocked. 
\subsubsection{Description of the stopping criterion}
 In order to realize the  hybrid scheme, we introduce a new criterion in addition to \eqref{eq:2e1} which limits the size of CS as required. Accordingly, we define 
\begin{align}\label{isinr12} 
\gamma_{k,n}=\frac{|\Sigma_{a\in \mathcal{A}_k}\chi_{k,a}\nu_{k,n}\bar{h}_{\mathcal{F}_{k,\: n}}^{a} \hat{w}_{k,\: n}^a|^{2}}{I_{\mathcal{FM}_{k,\: n}}+I_{\mathcal{F}_{k,\: n}}+\sigma_{k,n}^{2}} , 
\end{align}
as the SINR when CS of FUE $k$ include node $a$. As well as, the average received SINR is defined $\varPhi_{k}=\log_2(1+\Sigma_{n\in \mathcal{N}}\gamma_{k,n})$. Now, we introduce a new criterion constraint which determines node $a$ can join to the CS or not. Actually, because of the hybrid scheme, the number of the cooperative nodes is variable on request. The mentioned criterion is defined as follows:
 \begin{align}\label{phicri} 
& D^{\varPhi_k}_{CS}=|\varPhi_{k}^{\{\mathcal{A}_k \cup\{a\}\}}- \varPhi_{k}^{\{\mathcal{A}_k \}}| \leq \epsilon.
\end{align}
\subsubsection{Structure of the CTNSA Algorithm}
 
 \begin{algorithm}
 \caption{Matching CTNSA Algorithm.} 
 \textbf{\textit{Step 1: Initialization:}} \\
1.\quad Preset $ \mathcal{L}^{req}_a=\o{}$, $ \mathcal{L}^{rej}_a=\o{}$, $F_{\text{max}}$, $\hat{N}_a$ \: $\forall k, a$. \\
 \textbf{\textit{Step 2: Utility Computation:}}\\
       2.\quad  construct $\mathcal{P}_{k,CS}$  using $\varphi_{CS}^k(a)$ $\forall k$.\\
 \textbf{\textit{Step 3: Find stable matching ($\mu_{CS}$) without externalities:}}\\
               3.\quad \textbf{while} $\Sigma_{\forall a,k}b_{k\rightarrow a}^{CS}(t)\neq 0$ \textbf{do}\\
4.\quad\quad\textbf{for} each unassociated FUE $k$ \textbf{do}\\
 5.\quad\quad\quad \textbf{while} $D^{\varPhi_k}_{CS}\leq \epsilon$ \textbf{do}\\
               6.\quad\quad \quad\quad\textbf{find} $a=\text{arg max}_{a\in \succ_{k, CS}} \varphi_{CS}^k(a)$.\\
              7.\quad\quad \quad\quad $b_{k\rightarrow a}^{CS}=1$.\\
              8.\quad\quad \quad\quad \:\textbf{for} each node $a$ \textbf{do}\\
              9.\quad\quad\quad \quad \: $\mathcal{L}^{req}_a \leftarrow \{k:b_{k\rightarrow a}^{CS}=1, k\in \mathcal{K}\}$. \\
10.\quad\quad\quad\quad\: construct $\succ_{a,CS}$  using $\varphi_{CS}^a(k)$.\\
              11.\quad\quad\quad\quad\:\textbf{repeat}\\
              12.\quad\quad\quad\quad\quad\:\textbf{if} $|\mathcal{A}_k \cup\{a\}|\leq F_{\text{max}}$\\
13.\quad\quad\quad\quad\quad\quad\: \text{accept}  $k=\text{arg max}_{k\in \succ_{a, CS}} \varphi_{CS}^a(k)$.\\
14.\quad\quad\quad\quad\quad\quad\:  $\mathcal{K}_a := \mathcal{K}_a \cup \{k\}$.\\
15.\quad\quad\quad\quad\quad\quad\:  $\mathcal{A}_k := \mathcal{A}_k \cup \{a\}$.\\
16.\quad\quad\quad\quad\quad\:\textbf{end if}\\
17.\quad\quad\quad\quad\:\textbf{until} $|\mathcal{K}_a|=\hat{N}_a$\\
              18.\quad\quad \quad\quad $\mathcal{L}^{rej}_a:=\mathcal{L}^{rej}_a\setminus  \mathcal{K}_a \, .$\\
              19.\quad\quad \quad\quad remove node $a \in \succ_{k,CS},\forall k \in \mathcal{L}_a^{rej}.$\\
              20.\quad\quad\quad \textbf{end while} \\
              21.\quad \textbf{end while} \\
              22.\textbf{Output:} $\mu_{CS}$.\\
 \label{CTNSA algorithm}\end{algorithm} 
We propose Algorithm \ref{CTNSA algorithm} to perform cooperative node selection in an advanced manner. In this algorithm, after initialization, user $k$ construct its preference list $\mathcal{P}_{k,CS}$ based on the utility function  $\varphi_{CS}^k(a)$ in \eqref{ukcs}. Afterward, user $k$ sends an attachment request $b_{k\rightarrow a}^{CS}$ to the most preferred transmission node $a$. This node adds user $k$ to its request list. Next, node $a$ constructs its preference list based on utility function $\varphi_{CS}^a(k)$ which we have been proposed at \eqref{uacs} before, and checks out the limitations on CS size and the number of connected users.  If these limitations are satisfied, accepts request of user $k$. In the same way, user $k$ sends request to the next preferred node as the other cooperative node until the quality criteria proposed at \eqref{phicri} be satisfied or the CS size be overflowed.
\subsubsection{Matching with externalities}
As the throughput of each
FUE is strongly affected by the dynamic formation of
other FUE-FBSs links due to the dependence of the utility functions
on externalities, the proposed game can be classified as a many-to-many matching game with externalities.
Anomalous to many other papers which work in small cell domain and deals with conventional matching games, we assume that the individual
players’ utilities practically are affected by the other player’s preferences.
In fact, due to externalities, a player may prefer to change its preference order in response to the formation of other UE-SBS links. Therefore, we employ a new stability concept, based on the idea of swap-matching\cite{externalities}. For the given matching $\mu_{CS}$, a swap-matching for any possible pair of FUEs $k$ , $m\in\mathcal{K}$ and FBSs $a$, $i\in\mathcal{A} $ where $(a,k)$, $(i,m)\in \mu_{CS}$, $a\in\mathcal{A}_k $ and $i\in\mathcal{A}_m $   is defined as $\mu^{k}_{{CS}_{i,a}}=\{\mu_{CS}\backslash (a,k)\}\cup (m,k)$. The given matching is stable
if there exist no swap-matchings $\mu^{k}_{{CS}_{i,a}}$ such that $\varphi_{CS}(\mu^{k}_{{CS}_{i,a}})>\varphi_{CS}(\mu_{CS})$ or in the other words, $\mu$ is stable if there is not any transmission node which FUE $k$ prefers to replaced in its accepted set ($\mathcal{A}_k$) and there is not any FUE which node $a$ prefers to serve in its accepted set ($\mathcal{K}_a$). In order to find a stable matching ($\mu^{*}_{CS}$), we propose Algorithm \ref{swap algorithm} which can update matching $\mu_{CS}$ based on the externalities. Line (5) of the algorithm indicates that  FUE $k$ may prefer node $i$ based on the updated utility function. the Algorithm \ref{swap algorithm} monitors any preferred requests based on the given network and matching.
\begin{algorithm}
 \caption{Swap-matching Algorithm.} 
\textbf{\textit{Step 1: Perform initial matching:}}\\
1.\quad\: Import $\mu_{CS}$ through results in Algorithm  \ref{CTNSA algorithm}\\
\textbf{\textit{Step 2: Swap-matching Evaluation:}}\\
2.\quad\textbf{repeat}\\
3.\quad\quad  the utility $\varphi_{CS}$ is updated based on the current $\mu_{CS}$.   \\
4.\quad\quad construct $\succ_{a,CS}$ and $\succ_{k,CS}$ based on the new $\varphi_{CS}$.\\
5.\quad\quad \textbf{if} $(i,\mu^{k}_{{CS}_{i,a}}) \succ_{k,CS} (a,\mu_{CS}) $.\\
6.\quad\quad\quad\: $b_{k\rightarrow i}^{CS}=1$.\\
7.\quad\quad\quad\: node $i$ computes $\varphi_{CS_{i,k}}(\mu^{k}_{{CS}_{i,a}})$.\\
8.\quad\quad\quad\: \textbf{if}  $(k,\mu^{k}_{{CS}_{i,a}}) \succ_{i,CS} (k,\mu_{CS}) $.\\
9.\quad\quad\quad\quad\:  $\mathcal{K}_i := \mathcal{K}_i \cup \{k\}$.\\
10.\quad\quad\quad\quad\:  $\mathcal{A}_k := \mathcal{A}_k \cup \{i\}$.\\
11.\quad\quad\quad\quad\:  $\mu_{CS} \leftarrow \mu^{k}_{{CS}_{i,a}}$\\
12.\quad\quad\quad\:\textbf{end if}\\
13.\quad\quad \textbf{end if}\\
14.\quad\:\textbf{until} $\nexists  (i,\mu^{k}_{{CS}_{i,a}}) \succ_{k,CS} (a,\mu_{CS})  $ and  $(k,\mu^{k}_{{CS}_{i,a}}) \succ_{i,CS} (k,\mu_{CS}) $.\\
              15.\textbf{Output:} $\mu^*_{CS}$.
 \label{swap algorithm}\end{algorithm} 
\subsection{enhanced Subcarrier Allocation (eCA)}
 The subcarrier allocation in PD-NOMA systems is investigated in many researches using the SCA approach or matching theory but none of them has directorship on the resource allocation design such that the interference of NOMA approach can be decreased. To solve this resource allocation problem, we propose a new method based on many-to-one matching structure. 
\begin{subequations} \label{eCA} 
     \begin{align}
&\mathcal{P}_{\text{eCA}} : \hspace{0.25cm} \underset{ \boldsymbol{\nu}}{\text{max}} \quad  \Sigma_{n\in \mathcal{N}}\Sigma_{k\in \mathcal{K}}r_{k,n},  \label{eq:2aa2}  \\
      &\hspace{1.5cm}\text{s.t.}\hspace{0.3cm}\eqref{eq:2e3}, \eqref{eq:2e4}, \eqref{eq:2f}. \nonumber 
     \end{align}
\end{subequations} 
In this problem, $\nu_{k,n}$ indicates the allocation of subcarrier $n$ to user $k$. 
Optimization problem $ \mathcal{P}_{\text{eCA}}$  can be defined by a tuple ($\mathcal{N}, \mathcal{K},  \succ_{\mathcal{N},CA}, \succ_{\mathcal{K},CA}$). Here, $\succ_{\mathcal{N},CA}$ and $\succ_{\mathcal{K},CA}$ denote the sets of the preference relations of FUEs and subcarriers, respectively.

Similar to the CTNSA algorithm, mapping function $\mu_{CA}$ is defined such that:
1) $n\in \mu_{CA}(k) \longleftrightarrow k\in \mu_{CA}(n)$; 
2) $|\mu_{CA}(k)|\leq 1$ and $|\mu_{CA}(n)|\leq q_{\text{max}} $.
As we mentioned at the previous algorithm, we propose some utility functions in order to decrease the impact of the extra-interference of PD-NOMA systems which can not be removed using SIC procedure. The utility function of the FUE $k$ and each subcarrier are described as follows:
\begin{equation} \label{ukcs1} 
     \begin{aligned}
& \varphi_{CA}^k(n)=r_{k,n}=\log_2(1+ \gamma_{k,n}),
     \end{aligned}
\end{equation} 
 \begin{equation} \label{uacs11} 
     \begin{aligned} 
& \varphi_{CA}^n(k)=\varUpsilon_{CA} \frac{r_{k,n}-R_k}{r_{k,n}}-\varTheta_{k,n}.
     \end{aligned}
\end{equation}
To minimize the total interefernce caused by user $i$,  $\{\forall i\in \mathcal{K}   \big{|} \: |\bar{h}^a_{\mathcal{F}_{i,\: n}}|\geq|\bar{h}^a_{\mathcal{F}_{k,\: n}}|   \}$, we employ a new parameter as  $\varTheta_{k,n}=\Sigma_{a\in \mathcal{A}}( \varTheta^n_{\mathcal{MF}_{k,a}}+ \varTheta^n_{\mathcal{F}_{k,a}})$ where 
 $\varUpsilon_{CA}$ is a weighting parameter. 

We use Algorithm \ref{eCA algorithm} to assign subcarriers in an advanced manner. For simplicity, we define the set of users with the same subcarrier as  $\mathcal{C}_n$ where each user constructs its preference list based on achievable data rate and sends an attachment request to all of the nodes in the CS which are determined in CTNSA algorithm. The called subcarrier accepts or rejects this proposal based on its preference list and utility function. If the subcarrier satisfies and the maximum number of users with the same subcarrier does not overflow, it can be assigned to user $k$. After the eCA algorithm,  we employ a swap-matching algorithm similar to Algorithm \ref{swap algorithm}.

 \begin{algorithm}
 \caption{Matching eCA Algorithm.} 
 \textbf{\textit{Step 1: Initialization:}} \\
1.\quad Preset $ \mathcal{L}^{req}_n=\o{}$, $ \mathcal{L}^{rej}_n=\o{}$, $q_{\text{max}}$,  $\forall k, n$. \\
 \textbf{\textit{Step 2: Utility Computation:}}\\
       2.\quad  construct $\mathcal{P}_{k,CA}$  using $\varphi_{CA}^k(n)$ $\forall k$.\\
 \textbf{\textit{Step 3: Find stable matching:}}\\
               3.\quad \textbf{while} $\Sigma_{\forall n,k}b_{k\rightarrow n}^{CA}(t)\neq 0$ \textbf{do}\\
4.\quad\quad\textbf{for} each FUE $k\in \mathcal{K}_a$  \textbf{do}\\
               5.\quad\quad \quad\quad\textbf{find} $n=\text{arg max}_{n\in \succ_{k, CA}} \varphi_{CA}^k(n)$.\\
              6.\quad\quad \quad\quad $b_{k\rightarrow n}^{CA}=1$.\\
              7.\quad\quad \quad\quad \:\textbf{for} each subcarrier $n$ \textbf{do}\\
              8.\quad\quad\quad \quad \: $\mathcal{L}^{req}_n \leftarrow \{k:b_{k\rightarrow n}^{CA}=1, k\in \Sigma_{a\in\mathcal{A}_k}\mathcal{K}_a\}$. \\
9.\quad\quad\quad\quad\: construct $\succ_{n,CA}$  using $\varphi_{CA}^n(k)$.\\
              10.\quad\quad\quad\quad\:\textbf{repeat}\\
 11.\quad\quad\quad\quad\quad\quad\: $k=\text{arg max}_{k\in \succ_{n, CA}} \varphi_{CA}^n(k)$.\\
12.\quad\quad\quad\quad\quad\quad\: \text{assign} $n$ to the \text{FUE}  $k$.\\
13.\quad\quad\quad\quad\quad\quad\:  $\mathcal{C}_n := \mathcal{C}_n \cup \{k\}$.\\ 
14.\quad\quad\quad\quad\:\textbf{until} $|\mathcal{C}_n|=q_\text{max}$\\
              15.\quad\quad \quad\quad $\mathcal{L}^{rej}_n:=\mathcal{L}^{rej}_n\setminus  \mathcal{C}_n \, .$\\
              16.\quad\quad \quad\quad remove subchannel $n \in \succ_{k,CA},\forall k \in \mathcal{L}_n^{rej}.$\\
              17.\quad \textbf{end while} \\
              18.\textbf{Output:} $\mu_{CA}$.
 \label{eCA algorithm}\end{algorithm} 
\subsection{Robust Beamforming design}\label{robustbeam}
After subcarrier allocation step, we try to perform beamforming in imperfect CSI conditions using two different method which are described as follows:
\subsubsection{Worst-Case in No-CSI Situation}														Based on the additive error model and \eqref{ellipsoid}, we define uncertainty of channels in the Euclidean ball-shaped uncertainty sets as $ \mathcal{H}_{\mathcal{F}_{k,\: n}}$, $ \mathcal{H}_{\mathcal{FM}_{k,\: n}}$ and $ \mathcal{H}_{\mathcal{MF}_{ n}}$.
In this case, we define $ \zeta_{k,n} $, $ \kappa_{k,n} $ and $ \eta_{k,n} $ as the error bounds on the uncertainty region of the channel coefficients $ \textbf{h}_{\mathcal{F}_{k,\: n}} $, $ \textbf{h}_{\mathcal{FM}_{k,\: n}} $ and $ \textbf{h}_{\mathcal{MF}_{n}} $, respectively.
In the worst-case scenario,  channel coefficients of problem \eqref{eq:2} must be in the considered uncertainty sets, i.e. $  \textbf{h}_{\mathcal{F}_{k,\: n} }\in  \mathcal{H}_{\mathcal{F}_{k,\: n}}  , \:  \textbf{h}_{\mathcal{FM}_{k,\: n} }\in  \mathcal{H}_{\mathcal{F}_{k,\: n}} , \: \textbf{h}_{\mathcal{MF}_{n} }\in  \mathcal{H}_{\mathcal{F}_{ k,n}} $.
Note that distributions of error vectors are unknown and the critical case of them must be considered. In this regard, instead of $ |(\textbf{w}_{k,\: n}\circ\boldsymbol{\rho}_{k,\: n})^H\textbf{h}_{\mathcal{F}_{k,\: n} }|^{2}$ which is a quadratic function, we can write
\begin{equation}
      \begin{aligned}
    & |\textbf{v}^{H}_{k,\: n}\textbf{h}_{\mathcal{F}_{k,\: n} }|^{2}=\textbf{v}^{H}_{k,\: n}(\bar{\textbf{H}}_{\mathcal{F}_{k,\: n} }+\Delta_{\mathcal{F}_{k,\: n}})\textbf{v}_{k,\: n}=\nonumber\\
&\text{trace}[(\bar{\textbf{H}}_{\mathcal{F}_{k,\: n} }+\Delta_{\mathcal{F}_{k,\: n}})\textbf{V}_{k,\: n} ].
       \label{eq:33}  
      \end{aligned}
\end{equation} 
Since  $|\textbf{v}^{H}_{k,\: n}\textbf{h}_{\mathcal{F}_{k,\: n} }|^{2}$ is a nonlinear expression, we can apply the SDR method where $ \textbf{v}^{H}\textbf{A}\textbf{v}=\text{trace}[\textbf{A}\textbf{v}\textbf{v}^{H}] $.
In this solution, we have confidence that  $ \textbf{V}_{k,n}=(\textbf{w}_{k,\: n}\circ\boldsymbol{\rho}_{k,\: n})*(\textbf{w}_{k,\: n}\circ\boldsymbol{\rho}_{k,\: n})^H=\textbf{W}_{k,n}\circ \boldsymbol{\varrho}_{k,\: n} $ and $ \boldsymbol{\varrho}_{k,\: n}=\boldsymbol{\rho}_{k,\: n}*\boldsymbol{\rho}_{k,\: n}^H $.
 The expression $ \Delta_{\mathcal{F}} $ is defined as a norm-bounded matrix, i.e., $ \|\Delta_{\mathcal{F}}\|\leq \varepsilon_{\mathcal{F}_k}$ and $\bar{\textbf{H}}^f_{\mathcal{F}_{k,\: n} }=\bar{\textbf{h}}^f_{\mathcal{F}_{k,\: n}}\bar{\textbf{h}}^{fH}_{\mathcal{F}_{k,\: n}}$. Hence, $ \varepsilon_{\mathcal{F}_k}  $, $\varepsilon_{\mathcal{FM}_k}$ and $\varepsilon_{\mathcal{MF}_k}$ can be found as follows:
     \begin{align} \label{eq:35}
& \|\Delta_{\mathcal{F}_{k,\: n}}\|\leq \varepsilon_{\mathcal{F}_k}=\zeta^{2}_k + 2\zeta^{2}_k\|\bar{\textbf{h}}_{\mathcal{F}_{k,\: n}}\|,\nonumber \\  
& \|\Delta_{\mathcal{FM}_{k,\: n}}\|\leq \varepsilon_{\mathcal{FM}_k}=\kappa^{2}_k + 2\kappa^{2}_k\|\bar{\textbf{h}}_{\mathcal{FM}_{k,\: n}}\|,\\
& \|\Delta_{\mathcal{MF}_{n}}\|\leq \varepsilon_{\mathcal{MF}}=\eta^{2} + 2\eta^{2}_k\|\bar{\textbf{h}}_{\mathcal{MF}_{n}}\|.\nonumber
  \end{align}
 In order to define the critical situations, 
we apply \eqref{eq:33} with minimizing the numerator and maximizing the denominator of the SINR of users. Hence, \eqref{eq:2b} can be written as
      \begin{align}
      &  \text{trace}[(\bar{\textbf{H}}_{\mathcal{F}_{k,\: n} }-\varepsilon_{\mathcal{F}_{k}}\textbf{I}_{ F T_f})(\textbf{W}_{k,n}\circ\boldsymbol{\varrho}_{k,\: n}) ]- \nonumber \\
&(2^{R_k}-1) (I^{W}_{\mathcal{FM}_{k,\: n}}+I^{W}_{\mathcal{F}_{k,\: n}})\geq (2^{R_k}-1)\sigma_{k,n}^2, \label{eq:38}
      \end{align}
where $ I^{W}_{\mathcal{F}_{k,\: n}}=\Sigma_{i\neq k,\|\textbf{h}_{\mathcal{F}_{i,\: n}}\|> \|\textbf{h}_{\mathcal{F}_{k,\: n}} \|}   \text{trace}[(\bar{\textbf{H}}_{\mathcal{F}_{n} }+\varepsilon_{\mathcal{F}_{n}}\textbf{I}_{ FT_f})(\textbf{W}_{i,n}\circ\boldsymbol{\varrho}_{i,\: n}) ] $ and $  I^{W}_{\mathcal{FM}_{k,\: n}}= \text{trace}[(\bar{\textbf{H}}_{\mathcal{FM}_{k,\: n} }+\varepsilon_{\mathcal{FM}_{k}}\textbf{I}_{T_m})\textbf{M}_{n} ] $.
َAs the same way, the critical equivalent of \eqref{eq:2c} can be expressed as 
\begin{align}
   \Sigma_{k \in \mathcal{K}} \text{trace}[(\bar{\textbf{H}}_{\mathcal{MF}_{n} }+\varepsilon_{\mathcal{MF}_{n}}\textbf{I}_{FT_f})(\textbf{W}_{k,n}\circ\boldsymbol{\varrho}_{k,\: n}) ] \leq \epsilon_M.   \label{eq:39}   
\end{align}
The final problem in the worst-case scenario is rewritten as follows:
\begin{subequations} \label{eq:40}
     \begin{align}
     & \underset{\textbf{W} ,\:\boldsymbol{\varrho}}{\text{max}} \quad  \Sigma_{k\in \mathcal{K}}\Sigma_{n\in \mathcal{N}}   r_{k,n},  \\
      &\text{s.t.} \hspace{.25cm} \eqref{sic}, \eqref{eq:38} \text{ and } \eqref{eq:39}\nonumber \\
               &\hspace{1cm} 0 < \Sigma_{k\in \mathcal{K}}\Sigma_{n\in \mathcal{N}} \text{trace}[\hat{\textbf{W}}^f_{k,\: n}\circ\hat{\boldsymbol{\varrho}}^f_{k,\: n}]\leq P_\text{max}, \label{eq:sdrd}\\
&\hspace{1cm} \textbf{W}_{k,\:n}\succeq 0, \;\text{rank}(\textbf{W}_{k,\:n} ) = 1.  \label{eq:sdrh}          
     \end{align}
\end{subequations}
Due to the non-convex rate function, the optimization problem \eqref{eq:40} is non-convex.
To tackle this issue, the SCA approach with difference of two concave functions (D.C.) approximation method
is used. At first, the 
rate function in the objective is written as
\begin{align}
& r_{k,n}=f_{k,n} - g_{k,n}, \label{dfg}
\end{align}
where 
%\begin{subequations}
\begin{align}
& f_{k,n}=\log_2\big(\text{trace}[(\bar{\textbf{H}}_{\mathcal{F}_{k,\: n} }-\varepsilon_{\mathcal{F}_{k,n}}\textbf{I}_{ F T_f})(\textbf{W}_{k,n}\circ\boldsymbol{\varrho}_{k,\: n}) ]+\nonumber \\ & I^{W}_{\mathcal{FM}_{k,\: n}}+I^{W}_{\mathcal{F}_{k,\: n}} +\sigma_{k,n}^2\big),  \label{f}\\
& g_{k,n}=\log_2(  I^{W}_{\mathcal{FM}_{k,\: n}}+I^{W}_{\mathcal{F}_{k,\: n}} +\sigma_{k,n}^2). \label{g}
\end{align}
%\end{subequations}
By applying the D.C. approximation, $ g_{k,n} $ is approximated as follows
\begin{align}
& g_{k,n}(\textbf{W}_{k,n})\approx  g_{k,n}(\textbf{W}_{k,n}^{[t-1]}) + \nonumber \\ &\langle \nabla g_{k,n}(\textbf{W}_{k,n}^{[t-1]}),(\textbf{W}_{k,n}^{[t]}-\textbf{W}_{k,n}^{[t-1]})\rangle, \label{appg}
\end{align}
where 
\begin{align}\label{g gradian Downlink BS}
&\nabla g_{k,n}(\textbf{W}_{k,n})  =\\& \nonumber
\left\{
  \begin{array}{ll}
    0, & \hbox{$\forall i = k$}, \\
    \frac{(\bar{\textbf{H}}_{\mathcal{F}_{n} }+\varepsilon_{\mathcal{F}_{n}}\textbf{I}_{ FT_f})^T\circ\boldsymbol{\varrho}_{i,\: n}}  {\ln(2) \big( I^{W}_{\mathcal{FM}_{k,\: n}} + I^{W}_{\mathcal{F}_{k,\: n}} +\sigma_{k,n}^2 \big)}, & \hbox{$\forall i \in \mathcal{K}, \|\bar{\textbf{h}}_{\mathcal{F}_{i,\: n}}\|> \|\bar{\textbf{h}}_{\mathcal{F}_{k,\: n}} \|$}.
  \end{array}
\right.
\end{align}
To approximate \eqref{sic} as the SIC constraint, we  employ a combinatorial method. 
At the first stage, we apply the D.C. approximation on the SINR functions which are at two sides of the SIC inequality as follows:
\begin{equation} \label{ss1} 
     \begin{aligned} 
&  
\Gamma_{(\textbf{w}_{k,\:n},\ \boldsymbol{\rho}_{k,\: n},\ \textbf{h}_{\mathcal{F}_{j,\: n}},\: \textbf{h}_{\mathcal{FM}_{j,\:n}})} = f^{'}_{k,n} - g^{'}_{k,n},
     \end{aligned}
\end{equation} 
 \begin{equation} \label{ss2} 
     \begin{aligned} 
& 
\Gamma_{(\textbf{w}_{k,\:n},\ \boldsymbol{\rho}_{k,\: n},\ \textbf{h}_{\mathcal{F}_{k,\: n}},\: \textbf{h}_{\mathcal{FM}_{k,\:n}})} = f^{''}_{k,n} - g^{''}_{k,n},
     \end{aligned}
\end{equation}
where $ f^{'}_{k,n} $ and $ g^{'}_{k,n} $ are the numerator and denominator
 of $ \Gamma_{(\textbf{w}_{k,\:n},\ \boldsymbol{\rho}_{k,\: n},\ \textbf{h}_{\mathcal{F}_{j,\: n}},\: \textbf{h}_{\mathcal{FM}_{j,\:n}})} $, respectively. Gradient of $ g^{'}_{k,n} $ are described as follow
\begin{align}\label{1g gradian Downlink BS}
&\nabla^{'}_{g^{'}_{k,n}} (\textbf{W}_{k,n}, \textbf{h}_{\mathcal{F}_{j,\: n}})  =\alpha_{i,j}(\textbf{h}_{\mathcal{F}_{j,\: n}}\textbf{h}^{T}_{\mathcal{F}_{j,\: n}})\circ\boldsymbol{\varrho}_{i,\: n},    
\end{align}
where $ \alpha_{i,j} $ equals one $ \forall i \in \mathcal{K}, \|\bar{\textbf{h}}_{\mathcal{F}_{i,\: n}}\|> \|\bar{\textbf{h}}_{\mathcal{F}_{j,\: n}} \| $ and equals zero for others.
As \eqref{ss1} and \eqref{ss2} are similar functions, \eqref{ss2} is approximated by the D.C. solution where $ \nabla^{''}_{g^{''}_{k,n}} (\textbf{W}_{k,n}, \textbf{h}_{\mathcal{F}_{k,\: n}}) $ can be calculated like \eqref{1g gradian Downlink BS}. 
 Next, we apply \eqref{additive} and Euclidean ball-shaped uncertainty set on the approximated SIC constraint.  Accordingly, inequality \eqref{sic} changes as follows  
\begin{align}\label{sic_worst}
& \text{trace}\big[(\bar{\textbf{H}}_{\mathcal{F}_{j,\: n} }-\bar{\textbf{H}}_{\mathcal{F}_{k,\: n} }-(\varepsilon_{\mathcal{F}_{j,\: n}}+\varepsilon_{\mathcal{F}_{k,\: n}})\textbf{I}_{F T_f})(\textbf{W}_{k,n}\circ\boldsymbol{\varrho}_{k,\: n}) \big]\nonumber \\
& + \text{trace}\big[\big(\bar{\textbf{H}}_{\mathcal{F}_{j,\: n}} +\varepsilon_{\mathcal{F}_{j,\: n}}\textbf{I}_{F T_f}\big) \textbf{T}_{(\textbf{W}^{[t]}_{k,n}, \textbf{h}_{\mathcal{F}_{j,\: n}},\alpha_{i,j})}
\big]\nonumber \\
& - \text{trace}\big[\big(\bar{\textbf{H}}_{\mathcal{F}_{k,\: n}} -\varepsilon_{\mathcal{F}_{k,\: n}}\textbf{I}_{F T_f}\big) \textbf{T}_{(\textbf{W}^{[t]}_{k,n}, \textbf{h}_{\mathcal{F}_{k,\: n}},\alpha_{i,k})}\big]\nonumber \\
& \text{trace}[\big(\bar{\textbf{H}}_{\mathcal{FM}_{k,\: n} }-\bar{\textbf{H}}_{\mathcal{FM}_{j,\: n} }-(\varepsilon_{\mathcal{FM}_{k,\: n}}+\varepsilon_{\mathcal{FM}_{j,\: n}})\textbf{I}_{T_m}\big)\textbf{M}_{n}]\nonumber \\ &+\sigma^2_{k,n}-\sigma^2_{j,n}\geq 0, \forall j,k \in \mathcal{K}, n\in \mathcal{N}.
\end{align}
 where 
 \begin{align}
 & \textbf{T}_{(\textbf{W}^{[t]}_{k,n}, \textbf{h}_{\mathcal{F}_{j,\: n}},\alpha_{i,j})}=\big((-\Sigma_{i\in \mathcal{K},\|\textbf{h}_{\mathcal{F}_{i,\: n}}\|> \|\textbf{h}_{\mathcal{F}_{j,\: n}} \|} \textbf{W}^{[t-1]}_{i,n}+\nonumber \\ &\alpha_{i,j}\textbf{W}^{[t-1]}_{k,n}-\alpha_{i,j}\textbf{W}^{[t]}_{k,n})\circ\boldsymbol{\varrho}_{i,\: n}\big)
 \end{align}
Since rank(.) is a non-convex constraint, it can be guaranteed with Gaussian randomization method.
By applying the D.C. approximation, the optimization problem \eqref{eq:40} is approximated by a convex function which can be solved by CVX toolbox.       
 \subsubsection{Stochastic Imperfect CSI Case}
 In this case, we consider that distribution and covariance of the error vectors are clear and because of the independence between antennas in different FBSs, error vectors from all of the FBSs and MBS to the $ k^{th} $ user at the $ n^{th} $ subcarrier can be expressed as \eqref{bbb}.
Further, we can use the definition of cumulative distribution function (CDF) of exponential random variable (i.e.,$ f(x)=\lambda e^{-\lambda x}  $ for $ 0\leq x $ and $ F_{(x;\lambda)}= 1-e^{-\lambda x} $) for \eqref{eq:2c}. In the paper, we assume $ \lambda=1 $ and $ x=\epsilon_M\times\text{trace}[\textbf{C}_{\textbf{h},\mathcal{MF}_{ n}}\,(\textbf{W}_{k,n}\circ\boldsymbol{\varrho}_{k,\: n})]\,  $. Hence, constraint \eqref{eq:2c} can be rewritten as
\begin{align}\label{eq:cdf}
 \Sigma_{k\in \mathcal{K}} \text{trace}[\textbf{C}_{\textbf{h},\mathcal{MF}_{ n}}\,(\textbf{W}_{k,n}\circ\boldsymbol{\varrho}_{k,n})]\leq   \frac{\epsilon_M}{\ln \frac{1}{\alpha}}.
\end{align}
 Let $ \bar{\textbf{h}}_{k,\:n}$, $\textbf{v}_{k,\:n}$ and $\textbf{C}^{1/2}_{e,\:k,\:n}$ be defined as follows:
 \begin{equation}
 \bar{\textbf{h}}_{k,\:n}= \left(
 \begin{matrix}
 \bar{\textbf{h}}_{\mathcal{F}_{k,\:n}}\\
 \bar{\textbf{h}}_{\mathcal{FM}_{k,\:n}}
 \end{matrix}
 \right)\:,\: \textbf{v}_{k,\:n}= \left(
 \begin{matrix}
 \textbf{v}_{\mathcal{F}_{k,\:n}}\\
 \textbf{v}_{\mathcal{FM}_{k,\:n}}
 \end{matrix}
  \right)  , 
 \label{eq:tarkib matrix}
 \end{equation}

 \begin{equation}
 \textbf{C}^{1/2}_{e,\:k,\:n}= \left(
   \begin{matrix}
 \textbf{C}^{1/2}_{e,\mathcal{F}_{k,\:n}}&\textbf{0}\\
 \textbf{0}&\textbf{C}^{1/2}_{e,\mathcal{FM}_{k,\:n}}
   \end{matrix}
   \right),
 \end{equation}
where $ \textbf{v}_{k,\:n}\sim \mathcal{N}(\textbf{0},\textbf{I}_{ FT_f+T_m})$. 
If we apply \eqref{eq:cdf} and the SDR method,  \eqref{eq:2} can be expressed as follows:
   \begin{subequations} \label{eq:sdr}
        \begin{align}
         &\underset{\textbf{W} ,\:\boldsymbol{\rho}}{\text{max}} \quad \Sigma_{k\in \mathcal{K}}\Sigma_{n\in \mathcal{N}}  r_{k,n}, \label{eq:sdra} \\
         &\text{s.t.}\hspace{0.25cm} \eqref{sic}, \eqref{eq:sdrd}, \eqref{eq:sdrh},  \eqref{eq:cdf}\nonumber \\
          & \hspace{1cm} \Pr {\bigg \{ } \textbf{v}^{H}_{k,\:n}\textbf{A}_{(\textbf{W}_{k,n}, \boldsymbol{\varrho}_{k,\: n})}\textbf{v}_{k,\:n}+2\text{Re}\{\textbf{v}^{H}_{k,\:n} \textbf{b}_{(\textbf{W}_{k,n},\boldsymbol{\varrho}_{k,\: n})}\}\nonumber \\& \geq c_{(\textbf{W}_{k,n},\boldsymbol{\varrho}_{k,\: n})} {\bigg \}} 
               \geq 1 -\beta, \label{eq:sdrb}
        \end{align} 
   \end{subequations} 
 where $ \textbf{A}_{(\textbf{W}_{k,n}, \boldsymbol{\varrho}_{k,\: n})} $, $ \textbf{b}_{(\textbf{W}_{k,n}, \boldsymbol{\varrho}_{k,\: n})} $ and $ c_{(\textbf{W}_{k,n}, \boldsymbol{\varrho}_{k,\: n})} $ are defined as follows:
     \begin{equation}
         \begin{aligned} 
 & \textbf{A}_{(\textbf{W}_{k,n}, \boldsymbol{\varrho}_{k,\: n})}=\textbf{C}^{1/2}_{e,\:k,\:n} \textbf{W}_{T_{k,\:n}}\textbf{C}^{1/2}_{e,\:k,\:n} , \\
  & \textbf{b}_{(\textbf{W}_{k,n}, \boldsymbol{\varrho}_{k,\: n})}=\textbf{C}^{1/2}_{e,\:k,\: n} \textbf{W}_{T_{k,\:n}}\bar{\textbf{h}}_{k,\: n} ,\\
 &  c_{(\textbf{W}_{k,n}, \boldsymbol{\varrho}_{k,\: n})}=-\bar{\textbf{h}}_{k,\: n}^{H} \textbf{W}_{T_{k,\:n}} \bar{\textbf{h}}_{k,\: n}+\sigma_{k,n}^2 , \label{eq:23} \\ 
      \end{aligned}
 \end{equation}
 and 
 \begin{equation}
 \textbf{W}_{T_{k,\:n}}= \left(
 \begin{matrix}
\textbf{D}_{\mathcal{F}_{k, n}}&\textbf{0}\\
\textbf{0}&-\textbf{m}_{n}\textbf{m}_{n}^H
 \end{matrix}
 \right), \label{d25}
 \end{equation}
where $ \textbf{D}_{\mathcal{F}_{k, n}} =\frac{1}{2^{R_k}-1}\textbf{W}_{k,n}\circ\boldsymbol{\varrho}_{k,\: n}-\Sigma_{i\in \mathcal{K},\|\textbf{h}_{\mathcal{F}_{i,\: n}}\|> \|\textbf{h}_{\mathcal{F}_{k,\: n}} \|}\textbf{W}_{i,n}\circ\boldsymbol{\varrho}_{i,\: n} $. Since \eqref{eq:sdrb} is a probabilistic inequality, we can use the Bernstein-Type inequality for quadratic forms of Gaussian variables. Therefore, \eqref{eq:sdrb} can replaced as follows:
  \begin{subequations}
           \begin{align}
           &
 \text{trace}(\textbf{A}_{(\textbf{W}_{k,n},\boldsymbol{\varrho}_{k,\: n})})-\sqrt{2\xi}x -\xi y \geq c_{(\textbf{W}_{k,n},\boldsymbol{\varrho}_{k,\: n})}, \label{eq:24}\\
 & \sqrt{\|\textbf{A}_{(\textbf{W}_{k,n}, \boldsymbol{\varrho}_{k,\: n})}\|^{2}_{F}+2\|\textbf{b}_{(\textbf{W}_{k,n}, \boldsymbol{\varrho}_{k,\: n})}\|^{2}}\leq x_{k,n},\label{eq:241}\\
 & y_{k,n}\textbf{I}_{FT_f+T_m}+\textbf{A}_{(\textbf{W}_{k,n}, \boldsymbol{\varrho}_{k,\: n})}\succeq 0, \ y_{k,n}\geq 0, \label{eq:242}
      \end{align}
 \end{subequations}
where $ y=\max \{\lambda_{\text{max}}(-\textbf{A}),0\} $, i.e., $y$ is the maximum eigenvalue of the matrix (-\textbf{A}) and $y$ and $ x $ are slack variables and $ \xi=-\ln \beta $.
To minimize the transmit power, $ y $ must be the principal eigenvalue of  $\textbf{C}_{e,FM}^{1/2}\textbf{m}_n\textbf{m}_n^H\textbf{C}_{e,FM}^{1/2}$, i.e., $ y=\| \textbf{C}_{e,FM}^{1/2}\textbf{m}_n \|$\cite{wang2013het}. Therefore, \eqref{eq:24}-\eqref{eq:242}
can be rewritten as  follows:
\begin{subequations}
\begin{align}
 & \text{trace}[(\textbf{C}_{\textbf{e},\mathcal{F}_{k,\: n}}+\bar{\textbf{h}}_{\mathcal{F}_{k,\:n}}\bar{\textbf{h}}_{\mathcal{F}_{k,\:n}}^{H})\textbf{D}_{\mathcal{F}_{k, n}}]- \sqrt{2\xi}x_{k,n} \nonumber \\ 
& \geq \sigma_{k,n}^2+ \textbf{m}_{n}^H((1+\xi)\textbf{C}_{\textbf{e},\mathcal{FM}_{k,\: n}}+\bar{\textbf{h}}_{\mathcal{FM}_{k,\:n}}\bar{\textbf{h}}_{\mathcal{FM}_{k,\:n}}^H)\textbf{m}_{n}, \label{sinr_1} \\ 
         &
         \left\|
         \begin{matrix}
         vec(\textbf{C}_{\textbf{e},\mathcal{F}_{k,\: n}}^{1/2}\textbf{D}_{\mathcal{F}_{k, n}}\textbf{C}_{\textbf{e},\mathcal{F}_{k,\: n}}^{1/2}) \\
         \sqrt{2}vec(\textbf{C}_{\textbf{e},\mathcal{F}_{k,\: n}}^{1/2}\textbf{D}_{\mathcal{F}_{k, n}}\bar{\textbf{h}}_{\mathcal{F}_{k,\: n}})\\
        \sqrt{ \varsigma_{\mathcal{FM}_{k,n}}}
         \end{matrix}
         \right\|\leq x_{k,n},\label{sinr_2} 
\end{align}
\end{subequations}
where 
 \begin{align}
&   \varsigma_{FM_{k,n}} \equiv  \|\textbf{C}_{\textbf{e},\mathcal{FM}_{k,\: n}}^{1/2}\textbf{m}_{n}\textbf{m}_{n}^H\textbf{C}_{\textbf{e},\mathcal{FM}_{k,\: n}}^{1/2}\|^2_{F}+\nonumber \\ & 2\|\textbf{C}_{\textbf{e},\mathcal{FM}_{k,\: n}}^{1/2}\textbf{m}_{n}\textbf{m}_{n}^H\bar{\textbf{h}}_{\mathcal{FM}_{k,\: n}}\|^2.\label{eq:26-1}
 \end{align}
 By applying \eqref{sinr_1} and \eqref{sinr_2}, problem \eqref{eq:sdr} can be rewritten as follows:
 \begin{subequations}\label{eq:finberns1}
      \begin{align}
       &\underset{\textbf{W} ,\:\boldsymbol{\rho},\textbf{X},y}{\text{max}} \quad  \Sigma_{k\in \mathcal{K}}\Sigma_{n\in \mathcal{N}}  r_{k,n},  \\
      & \text{s.t.}\hspace{0.25cm}  \eqref{sic}, \eqref{eq:sdrd}, \eqref{eq:sdrh},  \eqref{eq:cdf},\: \eqref{sinr_1}\: \text{and}\: \eqref{sinr_2}. \nonumber
  \end{align}
 \end{subequations}
The non-convex rate function can be approximated by \eqref{dfg} as follows 
\begin{subequations}
\begin{align}
& f_{k,n}=\log_2\big(\textbf{h}_{\mathcal{F}_{k,\: n}}^T(\textbf{W}_{k,n}\circ\boldsymbol{\varrho}_{k,\: n})\textbf{h}_{\mathcal{F}_{k,\: n}}+\nonumber \\ &  I_{\mathcal{F}_{k,\: n}} +   I_{\mathcal{FM}_{k,\: n}}+\sigma_{k,n}^2\big),  \label{f1}\\
& g_{k,n}=\log_2\big( I_{\mathcal{F}_{k,\: n}} +   I_{\mathcal{FM}_{k,\: n}}+\sigma_{k,n}^2\big),  \label{g1}
\end{align}
\end{subequations}
where $  I_{\mathcal{FM}_{k,\: n}}=  \textbf{h}_{\mathcal{FM}_{k,\: n}}^T\textbf{M}_{k,n}\textbf{h}_{\mathcal{FM}_{k,\: n}} $
and $ I_{\mathcal{F}_{k,\: n}}=\Sigma_{i\neq k,\|\textbf{h}_{\mathcal{F}_{i,\: n}}\|\geq \|\textbf{h}_{\mathcal{F}_{k,\: n}} \|} \textbf{h}_{\mathcal{F}_{k,\: n}}^T(\textbf{W}_{i,n}\circ\boldsymbol{\varrho}_{i,\: n})\textbf{h}_{\mathcal{F}_{k,\: n}} $ are replaced based on \eqref{additive} and \eqref{bbb}. 
Approximation of $ g_{k,n} $ is calculated as \eqref{appg} where 
\begin{align}\label{g gradian Downlink BS1}
&\nabla g_{k,n}(\textbf{W}_{k,n})  =\\& \nonumber
\left\{
  \begin{array}{ll}
    0, & \hbox{$\forall i = k$}, \\
    \frac{(\textbf{h}_{\mathcal{F}_{k,\: n}}\textbf{h}_{\mathcal{F}_{k,\: n}}^T)\circ \boldsymbol{\varrho}_{i,\: n} }  {\ln(2) \big(I_{\mathcal{FM}_{k,\: n}} + I_{\mathcal{F}_{k,\: n}} +\sigma_{k,n}^2 \big)}, & \hbox{$\forall i \neq k, \|\bar{\textbf{h}}_{\mathcal{F}_{i,\: n}}\|\geq \|\bar{\textbf{h}}_{\mathcal{F}_{k,\: n}} \|$}.
  \end{array}
\right.
\end{align}

For the non-convex SIC constraint, we apply ‌Bernstein solution on \eqref{ss1}, \eqref{ss1} and \eqref{1g gradian Downlink BS} as follows 
\begin{align}\label{bern2}
 &\Pr {\bigg \{ } \textbf{v}^{H}_{j,\:n}\textbf{A}^{'}_{( \textbf{W}^{'}_{T_{k,\:n}}, \textbf{C}_{e,\:j,\:n})} \textbf{v}_{j,\:n}+2\text{Re}\{\textbf{v}^{H}_{j,\:n}\textbf{b}^{'}_{( \textbf{W}^{'}_{T_{k,\:n}}, \textbf{C}_{e,\:j,\:n})}\} \nonumber \\
 &
 -\textbf{v}^{H}_{k,\:n}\textbf{A}^{''}_{( \textbf{W}^{''}_{T_{k,\:n}}, \textbf{C}_{e,\:k,\:n})}\textbf{v}_{k,\:n}-2\text{Re}\{\textbf{v}^{H}_{k,\:n}\textbf{b}^{''}_{( \textbf{W}^{''}_{T_{k,\:n}}, \textbf{C}_{e,\:k,\:n})}\} \nonumber \\
 &
 \geq c^{'}_{( \textbf{W}^{'}_{T_{k,\:n}}, \textbf{C}_{e,\:j,\:n})}-c^{''}_{( \textbf{W}^{''}_{T_{k,\:n}}, \textbf{C}_{e,\:k,\:n})} {\bigg \}} 
               \geq 1 -\beta,\forall j, k, n,
\end{align}
where $ \textbf{A}^{'} $, $ \textbf{A}^{''} $, $ \textbf{b}^{'} $, $ \textbf{b}^{''} $ , $ c^{'} $ and $ c^{''} $ are acquired similar to \eqref{eq:23}. $ \textbf{W}^{'}_{T_{k,\:n}}$ is defined with the same structure of \eqref{d25} where instead of $\textbf{D}_{\mathcal{F}_{k, n}}$ we define $\textbf{D}^{'}_{\mathcal{F}_{k, n}}=\textbf{W}^{[t]}_{k,n}\circ\boldsymbol{\varrho}_{k,\: n}- \textbf{T}_{(\textbf{W}^{[t]}_{k,n}, \textbf{h}_{\mathcal{F}_{j,\: n}},\alpha_{i,j})}$. Hence, $ \textbf{W}^{''}_{T_{k,\:n}} $ is alike $ \textbf{W}^{'}_{T_{k,\:n}} $ based on  $\textbf{D}^{''}_{\mathcal{F}_{k, n}}$ and $ \textbf{T}_{(\textbf{W}^{[t]}_{k,n}, \textbf{h}_{\mathcal{F}_{k,\: n}},\alpha_{i,k})} $. After applying the Bernstein inequality with a new variable as $ x^{'} $, we have
  \begin{subequations}
           \begin{align}
& \text{trace}[(\textbf{C}_{\textbf{e},\mathcal{F}_{j,\: n}}+\bar{\textbf{h}}_{\mathcal{F}_{j,\:n}}\bar{\textbf{h}}_{\mathcal{F}_{j,\:n}}^{H})\textbf{D}^{'}_{\mathcal{F}_{k, n}}]
- \nonumber \\ &\text{trace}[(\textbf{C}_{\textbf{e},\mathcal{F}_{k,\: n}}+\bar{\textbf{h}}_{\mathcal{F}_{k,\:n}}\bar{\textbf{h}}_{\mathcal{F}_{k,\:n}}^{H})\textbf{D}^{''}_{\mathcal{F}_{k, n}}]\nonumber \\ 
 & -\sqrt{2\xi}x^{'}_{k,n}  \geq \sigma_{i,n}^2-\sigma_{k,n}^2+ \nonumber \\ & \textbf{m}_{n}^H((1+\xi)\textbf{C}_{\textbf{e},\mathcal{FM}_{j,\: n}}+\bar{\textbf{h}}_{\mathcal{FM}_{j,\:n}}\bar{\textbf{h}}_{\mathcal{FM}_{j,\:n}}^H)\textbf{m}_{n}\nonumber \\
 & -\textbf{m}_{n}^H(\textbf{C}_{\textbf{e},\mathcal{FM}_{k,\: n}}+\bar{\textbf{h}}_{\mathcal{FM}_{k,\:n}}\bar{\textbf{h}}_{\mathcal{FM}_{k,\:n}}^H)\textbf{m}_{n} ,\forall j,k\in \mathcal{K}, n\in \mathcal{N},  \label{berbsic1}\\
 & \sqrt{\vartheta}\leq x^{'}_{k,n},\forall j,k\in \mathcal{K}, n\in \mathcal{N},\label{berbsic2}
      \end{align}
 \end{subequations}
 as the set of constraints that must be employed instead of the probabilistic SIC constraint where
$ \vartheta= \|\textbf{A}^{'}\|^{2}_{F}+2\|\textbf{b}^{'}\|^{2}+\|\textbf{A}^{''}\|^{2}_{F}+2\|\textbf{b}^{''}\|^{2} $.
If we employ replacement constraints, 
 problem \eqref{eq:finberns1} can be efficiently solved with the CVX software package. 
 \subsection{The Joint Resource Allocation Algorithm}\label{Joint}
In a centralized network, BSs can be connected to a central unit via backhaul links. We aim to employ a structure based on Algorithm \ref{GPsubcarrier} to allocate resources in this central unit. After performing measurements about channel quality and traffic and determining QoS requirements, this information  acts as the  input of Algorithm \ref{GPsubcarrier} in the central unit which is established as a central controller. We express our framework like a multi-stage attachment procedure  with some criteria and limitations to perform accept/reject decision in each stage. These stages are performed at the central unit considering interests of each user through a utility function criterion.
  The central unit determines the set of accepted users for all transmission nodes through the first stage of  Algorithm \ref{GPsubcarrier}. Based on the available subcarriers in each transmission node, this unit defines some criteria considering overal throughput of the network. Also, if each subcarrier is assigned to  user, the central unit performs suitable beamforming design for this user which is accepted successfully at last of the procedure.
 To solve resource allocation problem \eqref{eq:2} through Algorithm \ref{GPsubcarrier}, after initialization, communications between users and CSs are classified through variable $\boldsymbol{\chi}^*$. Then the subcarrier indicator can be obtained via the eCA algorithm as variable $\boldsymbol{\nu}^*$. The beamforming vector for each FBS is calculated using SCA and Gaussian randomization methods.
 This sequential iterative procedure continues until converged.

 \begin{algorithm}
 \caption{Alternative Sequential Algorithm.} 
 \textbf{\textit{Step 1: Initialization:}} \\
1.\quad Choose a feasible $ \textbf{W} $, $\boldsymbol{\nu} $, $ \epsilon_c $, and $  \epsilon_M  $.  \\
\textbf{\textit{Find Stable ASM Solution:}} \\
2.\quad \textbf{while} $   |\Sigma_{k\in \mathcal{K}}\Sigma_{n\in \mathcal{N}}  r_{k,n}^{(l)}-\Sigma_{k\in \mathcal{K}}\Sigma_{n\in \mathcal{N}}  r_{k,n}^{(l-1)}|\leq \epsilon_c   $ \textbf{do}\\
 \textbf{\textit{Step 2: {Cooperative Set Update and Clustering:}}}\\
       3.\quad  To find $\boldsymbol{\chi}^*(l)$,  solve \eqref{CTNSA} for preset $ \textbf{W} $ and $\boldsymbol{\nu} $.\\
 \textbf{\textit{Step 3: Subcarrier Allocation Update:}}\\
               4.\quad To find $\boldsymbol{\nu}^*(l)$,  solve \eqref{eCA} for preset $ \textbf{W} $ and $\boldsymbol{\chi}^*(l) $.\\
 \textbf{\textit{Step 4: Beamforming Design Update:}}\\
               5.\quad  To find $\textbf{W}^*(l)$,  solve \eqref{eq:40} and \eqref{eq:finberns1} for fix $ \boldsymbol{\chi}^*(l)  $ and $\boldsymbol{\nu}^*(l)$. \\
6.\quad \textbf{end while} \\
 \textbf{Output}  $ \textbf{W}^* $, $\boldsymbol{\chi}^* $ and  $\boldsymbol{\nu}^*$.
 \label{GPsubcarrier}\end{algorithm} 

 \subsection{Joint  transmission node and subcarrier allocation in a SCA algorithm}\label{subcarrier_SCA}
 \begin{subequations} \label{eq:subsca}
  In this subsection, we solve the joint cooperative node selection and subcarrier allocation problem with a SCA  based approach similar to the beamforming problem. Therefore, we rewrite problem \eqref{eq:2} as follows:
      \begin{align}
       & \underset{\textbf{W} , \boldsymbol{\rho}}{\text{max}} \quad  \Sigma_{n\in \mathcal{N}}\Sigma_{k\in \mathcal{K}}r_{k,n},   \\
       &\text{s.t.}\hspace{0.3cm}\eqref{eq:2b}, \eqref{sic}, \eqref{eq:2c}, \eqref{eq:2d}  \nonumber \\
  &\hspace{1cm} \Sigma_{a\in \mathcal{A}}\rho^{a}_{k,n}\leq F_\text{max}, \forall k\in \mathcal{K},  \forall n\in \mathcal{N} \label{limitfmax} \\
 &\hspace{1cm} \Sigma_{n\in \mathcal{N}}\Sigma_{k\in \mathcal{K}}\rho^{a}_{k,n}\leq \hat{N}_a, \forall a\in \mathcal{A}, \label{limitna} \\
               &\hspace{1cm} \Sigma_{k\in \mathcal{K}}\rho^a_{k,n}\leq q_\text{max}, \forall a\in \mathcal{A}, \forall n\in \mathcal{N}, \label{eq:2e3sca} \\
 &\hspace{1cm} \Sigma_{n\in \mathcal{N}}\rho^a_{k,n}\leq 1, \forall a\in \mathcal{A}, \forall k\in \mathcal{K}, \label{eq:2e4sca} \\
          %     & \rho_{k,\: n}^f\in {\bigg \{ } 0, 1 {\bigg \}},
              & \hspace{1cm} \rho^a_{k,n}\in
              \{0, 1\},
                              \forall k\in \mathcal{K}, n\in \mathcal{N}, a\in \mathcal{A},                 
               \label{eq:2fsca} 
      \end{align}
 \end{subequations}
 where $ \rho $ has been introduced, before.
 Considering the proposed method for the beamforming problem in subsection \ref{robustbeam}, we employ \eqref{eq:40} and \eqref{eq:finberns1} to manage constraints \eqref{eq:2b}, \eqref{sic}, \eqref{eq:2c}, and \eqref{eq:2d} in both the worst-case and probabilistic approaches, respectively. Moreover, constraints 
 \eqref{limitfmax}-\eqref{eq:2fsca} can be satisfied assuming variable $ \boldsymbol{\varrho}_{k,\: n} $ where each element of the main diameter of matrix $ \boldsymbol{\varrho}_{k,\: n} $ indicates $ \rho^a_{k,n} $.
\subsection{Convergence and stability of the iterative algorithm}\label{secconvergency}

In order to prove convergence of the proposed solution, we introduce Theorem  \ref{theorem1} as follows:
       \begin{theorem} \label{theorem1}
       Matchings in the CTNSA and eCA algorithms are
       stable in each iteration of Algorithm \ref{GPsubcarrier}.
       \end{theorem}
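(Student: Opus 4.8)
The plan is to prove stability of the matchings produced by the CTNSA and eCA algorithms separately, since both are deferred-acceptance-type procedures followed by a swap-matching refinement, and then argue that stability is preserved at every iteration of Algorithm~\ref{GPsubcarrier}. First I would fix the iteration index $l$ and treat $\textbf{W}$ and (for the eCA stage) $\boldsymbol{\chi}^*$ as fixed parameters, so that the utility functions $\varphi_{CS}^k(\cdot)$, $\varphi_{CS}^a(\cdot)$, $\varphi_{CA}^k(\cdot)$, $\varphi_{CA}^n(\cdot)$ become well-defined preference orders over the relevant finite player sets. For the CTNSA stage I would first show that Algorithm~\ref{CTNSA algorithm} terminates: each FUE $k$ proposes to transmission nodes in strictly decreasing order of $\varphi_{CS}^k(a)$, never proposing twice to the same node, so the total number of proposals is bounded by $|\mathcal{K}|\cdot|\mathcal{A}|$; the quota constraints $|\mu_{CS}(k)|\le F_{\text{max}}$, $|\mu_{CS}(a)|\le \hat N_a$ and the stopping criterion \eqref{phicri} only cut proposals short, so the loop in lines~3--21 must halt. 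The resulting $\mu_{CS}$ satisfies the mapping-function properties by construction (a node accepts $k$ only if its quota is not exceeded, and acceptance is mutual), so it is a valid matching.

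Next I would establish stability of the \emph{output} of the swap-matching Algorithm~\ref{swap algorithm}. Here I would invoke the swap-stability definition given just before the algorithm: $\mu_{CS}^*$ is stable if there is no blocking swap pair, i.e.\ no $(k,m,a,i)$ with $\varphi_{CS}(\mu^{k}_{{CS}_{i,a}})>\varphi_{CS}(\mu_{CS}^*)$. The key observation is that Algorithm~\ref{swap algorithm} terminates exactly at the condition in line~14, which is the logical negation of the existence of a profitable swap that both $k$ and $i$ approve (lines~5 and~8). So upon termination no such swap exists, which is precisely swap-stability. To make the termination claim rigorous I would exhibit a monotone potential: each executed swap in lines~5--13 strictly increases the aggregate matching utility $\varphi_{CS}(\mu_{CS})$ (since the swap is accepted only when it improves both $k$'s and $i$'s utility, hence the sum), and because the number of possible matchings is finite, the value can increase only finitely often — so the \texttt{repeat} loop exits in finite time, and at exit line~14 certifies stability. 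The identical argument applies verbatim to the eCA stage with the substitutions $(\mathcal{A},\varphi_{CS})\to(\mathcal{N},\varphi_{CA})$, quotas $(F_{\text{max}},\hat N_a)\to(1,q_{\text{max}})$, and the matching map $\mu_{CA}$, since Algorithm~\ref{eCA algorithm} is structurally the pairing/deferred-acceptance procedure and is also followed by a swap-matching step as stated in the text.

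Finally I would close the loop over iterations of Algorithm~\ref{GPsubcarrier}: within iteration $l$, Step~2 runs CTNSA with the current $(\textbf{W},\boldsymbol{\nu})$ and outputs a swap-stable $\boldsymbol{\chi}^*(l)$; Step~3 runs eCA with the now-fixed $(\textbf{W},\boldsymbol{\chi}^*(l))$ and outputs a swap-stable $\boldsymbol{\nu}^*(l)$; the beamforming update in Step~4 does not alter the matchings. Hence at the end of every iteration both matchings are stable with respect to the utility functions induced by the parameters active at the moment they were computed, which is the content of Theorem~\ref{theorem1}. The main obstacle I anticipate is the termination/monotonicity argument for the swap-matching stage under \emph{externalities}: because each player's utility depends on the entire current matching, a single swap changes other players' preference orders (lines~3--4 explicitly rebuild the preference lists after each swap), so one must be careful that ``each accepted swap strictly increases the aggregate utility'' genuinely holds and that no cycling through matchings of equal aggregate value can occur — this is where the acceptance test being two-sided (both $k$ via $\succ_{k,CS}$ and $i$ via $\succ_{i,CS}$ must strictly prefer the post-swap configuration) is essential, and it is the step I would write out in the most detail.
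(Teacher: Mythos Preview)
Your proposal is correct and lands on the same endpoint as the paper, but the route differs in emphasis. The paper's proof is brief: it invokes the deferred-acceptance/college-admissions framework, argues directly that no outside FUE $\acute{k}$ can profitably join the output group and no accepted pair can profitably leave (a classical blocking-pair check under responsive preferences), and then disposes of the swap-matching phase with a one-line finiteness appeal (``the number of possible swaps is finite \ldots\ and the subset of these swaps may be preferred''). You instead (i) bound the proposal count explicitly to get termination of the DA stage, (ii) read swap-stability directly off the exit condition of Algorithm~\ref{swap algorithm} (line~14 is the negation of a blocking swap), and (iii) introduce a monotone potential to force termination of the swap loop. Your version is more structured and makes explicit the one genuinely delicate point---cycling under externalities---which the paper essentially glosses over. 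One caution on your potential argument: strict two-sided improvement for $k$ and $i$ does not by itself imply that the \emph{aggregate} utility over all players increases when externalities let the swap shift third parties' utilities; to close this you should either verify that non-participants' utilities are unaffected (or not harmed in aggregate) by a single swap in this specific model, or else replace the potential by the finite-state-space argument the paper uses, augmented with a no-revisit observation.
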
 
       \begin{proof}
       As the proposed matching algorithms of this manuscript are based on basic principles
       of the deferred-acceptance algorithm and college admissions
       model with responsive preferences, a property of the stable algorithm in the matching theory is defined as: there is any node that can join to or remove from the stable group which the algorithm calculated before \cite{maingame} and \cite{stable1}.
                      To prove this statement, we consider a stable group $ \mathcal{G}_n $ of $(k,a)$ pairs which is introduced at the end of  iteration $l$. Then, we try to check that FUE $\acute{k}$, $\acute{k}\neq k$, can not join to  $ \mathcal{G}_n $.  Because of the preference relation introduced in algorithms CTNSA and eCA, FUE $k$ is the most preferred one which overcom to  $\acute{k}$ and $\Sigma_{(\acute{k}, \acute{a})\rightarrow\mathcal{G}_n} r^{a}_{k,n}>\Sigma_{\mathcal{G}_n\diagdown (\acute{k}, \acute{a})} r^{a}_{k,n}$ can not be realized. Similarly, we can derive that the stable pair $(k,a)$ can not remove from $\mathcal{G}_n$ and there is no matching $\acute{\mu}$ which is preferred to stable $\mu^*$.   It is noticeable that  at eCA Algorithm, only associated user $k\in \mathcal{K}_a$ is assumed in the matching process for subcarriers in set $ \mathcal{N}_a$.
                      Also, only subcarriers and FBSs that their capacity queues are empty
                       regarding to $ q_\text{max} $ and $ \hat{N}_a $ can be taken into acount in iteration $ l $. Due to the finite number of FBSs and subcarriers, the sets of the preference relations of FUEs, transmission nodes and subcarriers are also finite. Furthermore, the convergence of the swap-matching method follows from some considerations as: the number of possible swaps is finite due to the fact that each FUEs can reach a limited number of FBSs in its vicinity and also the subset of these swaps may be preferred.
       \end{proof}
        \begin{lemma} \label{lemma1}
        Matching sub-problems in Algorithm \ref{GPsubcarrier} converge to a lacal optimal solution.
        \end{lemma}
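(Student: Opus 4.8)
\emph{Proof proposal.} The plan is to argue convergence in two stages: (i) each matching sub-problem invoked inside Algorithm \ref{GPsubcarrier} terminates at a stable matching after finitely many steps, and (ii) the outer alternating loop produces a monotone, bounded sequence of objective values, hence converges, with the limit point being locally optimal in the combinatorial variables. For stage (i) I would invoke Theorem \ref{theorem1}: the CTNSA and eCA procedures are instances of deferred acceptance with responsive preferences, so they terminate at stable matchings $\mu^{*}_{CS}$ and $\mu^{*}_{CA}$. Finiteness is immediate since each FUE issues at most one request to each of the finitely many transmission nodes (respectively subcarriers) and never re-issues a request to a node that has rejected it (lines 18--19 of Algorithm \ref{CTNSA algorithm} and lines 15--16 of Algorithm \ref{eCA algorithm}); moreover the swap-matching loop of Algorithm \ref{swap algorithm} strictly increases the aggregate utility $\varphi_{CS}$ at every accepted swap, and the number of feasible swaps is finite because each FUE reaches only a bounded number of nearby nodes, so the inner loops halt.

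For stage (ii), the key observation is that the three update steps of Algorithm \ref{GPsubcarrier} are all carried out with respect to the same objective $\Sigma_{n\in\mathcal{N}}\Sigma_{k\in\mathcal{K}} r_{k,n}$ restricted to one block of variables while the remaining blocks are held fixed, over the same feasible set, so none of them can decrease the objective. The beamforming step solves the SCA surrogate of \eqref{eq:40} / \eqref{eq:finberns1}; by the standard D.C./SCA argument the surrogate is a tight lower bound at the incumbent point, so the true objective is non-decreasing along beamforming updates. The CTNSA and eCA steps always start from the incumbent association and accept only reassignments/swaps that do not decrease the corresponding block objective, so they too leave $\Sigma_{n}\Sigma_{k} r_{k,n}$ non-decreasing. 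Since the objective is bounded above --- per-FBS power is capped by $P_{\text{max}}$ in \eqref{eq:2d}, the number of users per node and per subcarrier is capped by $\hat{N}_a$ and $q_{\text{max}}$, and channel gains are finite --- the monotone non-decreasing sequence $\{\Sigma_{k}\Sigma_{n} r_{k,n}^{(l)}\}$ converges, and the stopping rule in line 2 of Algorithm \ref{GPsubcarrier} is triggered after finitely many outer iterations.

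To conclude local optimality I would combine the stability of the terminal matchings with the stationarity of the beamformers: at the fixed point $(\boldsymbol{\chi}^{*},\boldsymbol{\nu}^{*},\textbf{W}^{*})$ no swap in the CTNSA or eCA game is profitable (Theorem \ref{theorem1}), so $\boldsymbol{\chi}^{*},\boldsymbol{\nu}^{*}$ cannot be improved by single-node or single-subcarrier reassignments given $\textbf{W}^{*}$, while $\textbf{W}^{*}$ is a KKT point of the SCA-approximated beamforming problem given $\boldsymbol{\chi}^{*},\boldsymbol{\nu}^{*}$, which is the best attainable since the underlying beamforming problem is non-convex. The main obstacle will be making precise the sense in which local optimality holds across the mixed discrete--continuous variable space: one must verify that the surrogate utilities $\varphi_{CS}$, $\varphi_{CA}$ are consistent with the true rate objective along the moves the algorithm performs, so that utility ascent implies (weak) objective ascent rather than merely surrogate ascent. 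I would resolve this by noting that the interference-penalty terms $\varTheta_{k,a}$ and $\varTheta_{k,n}$ are constructed precisely so that a utility-improving swap never increases the aggregate interference that would degrade the true sum rate; hence the block-coordinate fixed point --- stable in each matching block and stationary in the beamforming block --- is not improvable by any joint perturbation reachable by the algorithm, which is the claimed local optimum.
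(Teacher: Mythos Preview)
Your argument is sound but mis-scoped: you are essentially proving Theorem~\ref{theorem3} (convergence of the full alternating scheme) rather than Lemma~\ref{lemma1}, which concerns \emph{only} the matching sub-problems $\mathcal{P}_{\text{CTNSA}}$ and $\mathcal{P}_{\text{eCA}}$ for fixed $\textbf{W}$ and fixed $\boldsymbol{\nu}$ (respectively $\boldsymbol{\chi}$). Your stage~(ii), the discussion of the beamforming update, and the joint KKT/stability fixed-point argument all belong to Lemma~\ref{lemma2} and Theorem~\ref{theorem3}, not here.

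The paper's proof of Lemma~\ref{lemma1} is narrower and more direct. It argues, within a single call to CTNSA (and analogously eCA), that (a) the aggregate utility $\Sigma_{a}\Sigma_{k}\varphi_{CS}^{a}(k)$ is non-decreasing along the accept/reject dynamics, and (b) the utilities $\varphi_{CS}^{k}(a)$, $\varphi_{CS}^{a}(k)$ are constructed so that their increase forces the numerator term $|\bar{h}^{a}_{\mathcal{F}_{k,n}}\hat{w}^{a}_{k,n}|^{2}/(2^{R_k}-1)$ to grow while the penalty $\varTheta_{k,a}$ suppresses the interference appearing in the SINR denominator; hence the sum rate in \eqref{eq:2aa} is itself non-decreasing, and with the finite player sets the sub-problem lands at a local optimum of \eqref{CTNSA}. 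You do state exactly this alignment point in your final paragraph, but you treat it as an ``obstacle'' to be patched rather than as the entire content of the lemma.

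So: strip stage~(ii) and the beamforming/KKT material, keep the finite-termination observation from stage~(i), and promote your last paragraph (surrogate--objective consistency via the $\varTheta$ penalties) to the main argument. That is the paper's proof; what remains of yours is then essentially the same, just stated with more care about why a utility-improving move cannot decrease the true sum rate.
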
 
        \begin{proof}
        See Appendix
        \end{proof}
                \begin{lemma}\label{lemma2} 
                Beamforming sub-problems in Algorithm \ref{GPsubcarrier} converge to a lacal optimal solution.
               \end{lemma}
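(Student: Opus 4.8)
The plan is to cast both beamforming problems \eqref{eq:40} and \eqref{eq:finberns1} as instances of the successive convex approximation (equivalently, the inner-approximation / minorization--maximization) framework, and then to invoke the standard monotone-convergence argument followed by a KKT-stationarity check at the limit point. First I would verify that, at every iteration $t$, the surrogate objective is a global minorizer of the true objective that is tight at the previous iterate $\textbf{W}^{[t-1]}$. This holds because in $r_{k,n}=f_{k,n}-g_{k,n}$ the term $g_{k,n}$ is a concave function of $\textbf{W}_{k,n}$ (a logarithm of an affine map of $\textbf{W}$), so its first-order Taylor expansion \eqref{appg} is a global over-estimator; hence $f_{k,n}-g_{k,n}^{\text{lin}}\le r_{k,n}$ with equality at $\textbf{W}_{k,n}^{[t-1]}$, and the gradients of the surrogate and of $r_{k,n}$ coincide there (the gradient formulas \eqref{g gradian Downlink BS} and \eqref{g gradian Downlink BS1} being exactly what makes this work). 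The same property must be argued for the linearized SIC constraints \eqref{sic_worst} and \eqref{berbsic1}, which are conservative (inner) approximations of \eqref{sic}: the feasible set of each surrogate is contained in that of the original problem, and $\textbf{W}^{[t-1]}$ remains feasible for the surrogate at iteration $t$.

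Second, I would establish monotonicity of the objective sequence. Since $\textbf{W}^{[t-1]}$ is feasible for the iteration-$t$ surrogate and the surrogate is tight there, solving the (now convex) subproblem yields an objective value at $\textbf{W}^{[t]}$ no smaller than the surrogate value at $\textbf{W}^{[t-1]}$, which equals the true objective at $\textbf{W}^{[t-1]}$; and because the surrogate minorizes the true objective, $\sum_{k,n} r_{k,n}(\textbf{W}^{[t]})$ is still no smaller. Hence the true objective is non-decreasing in $t$. Third, the feasible set is compact --- closed and bounded through the per-FBS power budget $P_\text{max}$ in \eqref{eq:sdrd} together with the PSD constraints in \eqref{eq:sdrh} --- and the objective is continuous, so the sequence of objective values is bounded above; a bounded monotone sequence converges, and by compactness $\{\textbf{W}^{[t]}\}$ has a convergent subsequence. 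The Bernstein-type replacements \eqref{sinr_1}--\eqref{sinr_2} and \eqref{berbsic1}--\eqref{berbsic2} are second-order-cone and linear in the optimization variables, so they pose no additional difficulty here.

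Finally, I would argue that any limit point is a stationary (KKT) point of the surrogate-consistent problem. Because the surrogate functions and constraints match the true ones in both value and gradient at the expansion point, and Slater's condition may be assumed for the convex subproblems, the KKT conditions of the iteration-$t$ problem pass, along the convergent subsequence, to the KKT conditions of the relaxed beamforming problem; convexity of each subproblem makes its KKT point its global optimum, so the limit is a local optimum of the relaxed problem. I expect the main obstacle to be twofold: (i) guaranteeing \emph{persistent feasibility}, i.e.\ that the previous iterate stays feasible for every freshly linearized SIC constraint, which is precisely what legitimises the monotonicity step and which requires the conservative (inner-approximation) nature of \eqref{sic_worst} and \eqref{berbsic1} to be made rigorous; and (ii) the rank-one constraint \eqref{eq:sdrh}, handled only via SDR plus Gaussian randomization, so strictly the convergence claim pertains to the SDP relaxation rather than the rank-constrained problem --- I would state the lemma at that level and treat the randomization step as a separate recovery guarantee.
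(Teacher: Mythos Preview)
Your proposal is correct and follows essentially the same route as the paper's own proof: both exploit the concavity of $g_{k,n}$ so that its first-order expansion is a global over-estimator, yielding a minorizing surrogate whose maximization produces a monotone non-decreasing objective sequence that is bounded above by the power constraint, hence convergent, with local optimality following from the standard SCA/D.C.\ stationarity argument. Your treatment is in fact more careful than the paper's---you make explicit the compactness of the feasible set, the KKT passage to the limit, and correctly flag the rank-one relaxation and the persistent-feasibility issue for the linearized SIC constraints, none of which the paper addresses directly.
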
 
               \begin{proof}
                       See Appendix
                       \end{proof}
\begin{theorem} \label{theorem3}
Algorithm \ref{GPsubcarrier} is guaranteed to converge.
\end{theorem}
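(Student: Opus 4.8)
The plan is to combine the three previous results — Theorem~\ref{theorem1} (stability of the CTNSA and eCA matchings in each iteration), Lemma~\ref{lemma1} (convergence of the matching sub-problems to a local optimum), and Lemma~\ref{lemma2} (convergence of the beamforming sub-problem to a local optimum) — into a single monotone-convergence argument for the outer loop of Algorithm~\ref{GPsubcarrier}. First I would define the objective sequence $U^{(l)} = \Sigma_{k\in\mathcal{K}}\Sigma_{n\in\mathcal{N}} r_{k,n}$ evaluated at the triple $\{\boldsymbol{\chi}^*(l),\boldsymbol{\nu}^*(l),\textbf{W}^*(l)\}$ produced at the end of outer iteration $l$, and argue that each of the three update steps within one iteration is non-decreasing in $U$.

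The key steps, in order: (i) In Step~2, the CTNSA matching is solved for the fixed $\textbf{W}$ and $\boldsymbol{\nu}$ carried over from iteration $l-1$; by Theorem~\ref{theorem1} the resulting matching is stable, and by Lemma~\ref{lemma1} the swap-matching refinement terminates at a point where no preferred swap exists, so the value of the CTNSA objective — which by \eqref{eq:2aa} is exactly a restriction of $U$ — cannot decrease relative to the incoming configuration. (ii) In Step~3 the eCA matching is solved with $\boldsymbol{\chi}^*(l)$ and $\textbf{W}$ held fixed; the same argument via Theorem~\ref{theorem1} and Lemma~\ref{lemma1} applied to \eqref{eq:2aa2} gives $U$ non-decreasing again, because the previous channel assignment remains feasible for the eCA sub-problem and thus serves as a lower bound on its optimum. (iii) In Step~4 the beamforming problem \eqref{eq:40} (worst-case) or \eqref{eq:finberns1} (stochastic) is solved by the SCA/D.C. procedure; Lemma~\ref{lemma2} guarantees this inner SCA iteration is itself monotone and converges, and since the previous beamformer $\textbf{W}^*(l-1)$ is feasible for the SCA subproblem (the D.C. linearization is tight at the current point), the returned $\textbf{W}^*(l)$ does not decrease $U$. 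Chaining the three inequalities yields $U^{(l)} \ge U^{(l-1)}$ for all $l$. Finally I would invoke boundedness: $U^{(l)}$ is bounded above because the transmit power is capped by $P_{\text{max}}$ in \eqref{eq:2d}/\eqref{eq:sdrd}, every channel gain is finite, the noise power is strictly positive, and the number of active $(k,n,a)$ triples is finite; a bounded monotone sequence converges, and the stopping criterion $|U^{(l)}-U^{(l-1)}|\le\epsilon_c$ in Algorithm~\ref{GPsubcarrier} is therefore met in finitely many iterations.

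The main obstacle I anticipate is step~(iii): showing that the beamforming update does not \emph{decrease} $U$ requires that the incoming point $\textbf{W}^*(l-1)$ be feasible for the linearized (SCA) subproblem at that point, which hinges on the D.C. surrogate in \eqref{appg} being a global under-estimator of the true objective and the linearized SIC constraint \eqref{sic_worst} (resp.\ \eqref{berbsic1}) being satisfied with equality at the linearization point — i.e.\ the SCA surrogate agrees with the original at the current iterate. One must also be careful that the Gaussian randomization used to enforce the rank-one constraint \eqref{eq:sdrh} does not break monotonicity; I would handle this by either assuming the SDR is tight (so randomization is vacuous) or by noting that the randomization step is a one-time post-processing that does not enter the iterative loop, so the convergence statement is about the relaxed iterates. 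A secondary subtlety is that across a full outer iteration the fixed-point arguments for the two matching stages use \emph{different} frozen variables, so I would state explicitly that each sub-problem's feasible set contains the previous solution, which is what makes the "previous value is a lower bound" step valid at every stage.
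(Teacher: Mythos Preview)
Your proposal is correct and follows essentially the same approach as the paper: invoke Theorem~\ref{theorem1}, Lemma~\ref{lemma1}, and Lemma~\ref{lemma2} to conclude that each of the three sub-problems converges to a local optimum, then appeal to finiteness/boundedness of the feasible set to conclude the outer loop terminates. Your write-up is in fact considerably more detailed and careful than the paper's own proof, which is only two sentences long and does not spell out the monotone-sequence argument or the feasibility-of-previous-iterate point you raise; the concerns you flag about the randomization step and the D.C. surrogate tightness are not addressed in the paper at all.
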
 
\begin{proof}
Algorithm \ref{GPsubcarrier} has three sub-problems. Based on Lemma \ref{lemma1}, Lemma \ref{lemma2}, and Theorem \ref{theorem1},  each of them  converges to its maximum local solution, and also due to the finite number of variables and  
feasible sets, the overal Algorithm \ref{GPsubcarrier} garanteed to be converged.
\end{proof}
\section{COMPUTATIONAL COMPLEXITY}\label{seccomputation}
In this section, we discuss the computational complexity of the proposed resource allocation approaches. The applied matching algorithm  to solve \eqref{eq:2} includes three steps: Cooperative node selection through CTNSA, determining the subcarrier allocation and  beamforming. In CTNSA algorithm, we assume that the maximum allowable number of cooperative nodes are used. Therefore, the worst-case complexity of transmission node selection can be determined by $\Sigma_{a=1:F_{\text{max}}}K_{CS}(l)(FT_f-a)$ where $K_{CS}(l)$ defines the number of FUEs can join into the CS phase at iteration $l$. The complexity of eCA phase is determined by $K_{CA}(l)N_cFT_f$ where $K_{CA}(l)$ is the number of FUEs join into the eCA phase at iteration $l$. It is significant
 that complexities of the worst-case and stochastic approaches are same at both CS and eCA phases while in the beamforming phase they are different. In stochastic case, CVX selects the SDP method for power allocation. This method can be handled by the interrier point method with a worst-case complexity of $ O(\text{Max}\{m,n\}^4n^{\frac{1}{2}}\times\text{log}(\frac{1}{\varrho})) $ with the solution accuracy $ \varrho $ where $ m $ is the number of constraints and $ n $ is the problem size\cite{sdr2} and \cite{complexity}.
In power allocation $ m $ is $ N_c \times \big{(} 2K^2+2K+1\big{)}$ and $ n $ equals $ F\times T_f $. Computational complexity of the proposed SCA approach in subcarrier allocation stage is similar to the beamforming problem plus $ N_c FT_f \times\big{(} K+2 \big{)} +KN_c+FT_f$.
 In the worst-case, all of the subcarrier and power allocation stages are similar to the stochastic method, but the number of constraints in power allocations is $ m = N_c \times \big{(}  K^2+ K+1\big{)}$ and $ n $ equals $ F\times T_f $. 
%The computational complexity of the proposed approaches for phase three is summarized in Table \ref{table}.
%\begin{table}[h] \centering  
%\caption{Computational Complexity of Proposed Methods}
%\begin{tabular}{cc}\label{table}
 %Approach & Beamforming Phase \\ 

%\hline Partial CSI &   $O(\text{Max}\{ N_c \times \big{(} 2K^2+2K+F\big{)}, n\}^4n^{\frac{1}{2}}\times \text{log}(\frac{1}{\varrho})) $
%\\ \hline  Worst-case &  $O(\text{Max}\{ N_c \times \big{(}  K^2+ K+F\big{)}, n\}^4n^{\frac{1}{2}}\times \text{log}(\frac{1}{\varrho})) $ \\ 
%\hline 
%\end{tabular} 
%\end{table}
 
 \section{SIMULATION RESULTS}\label{secsimulation}

\begin{figure}
	\centering
	\includegraphics[width=9 cm, height =6 cm]{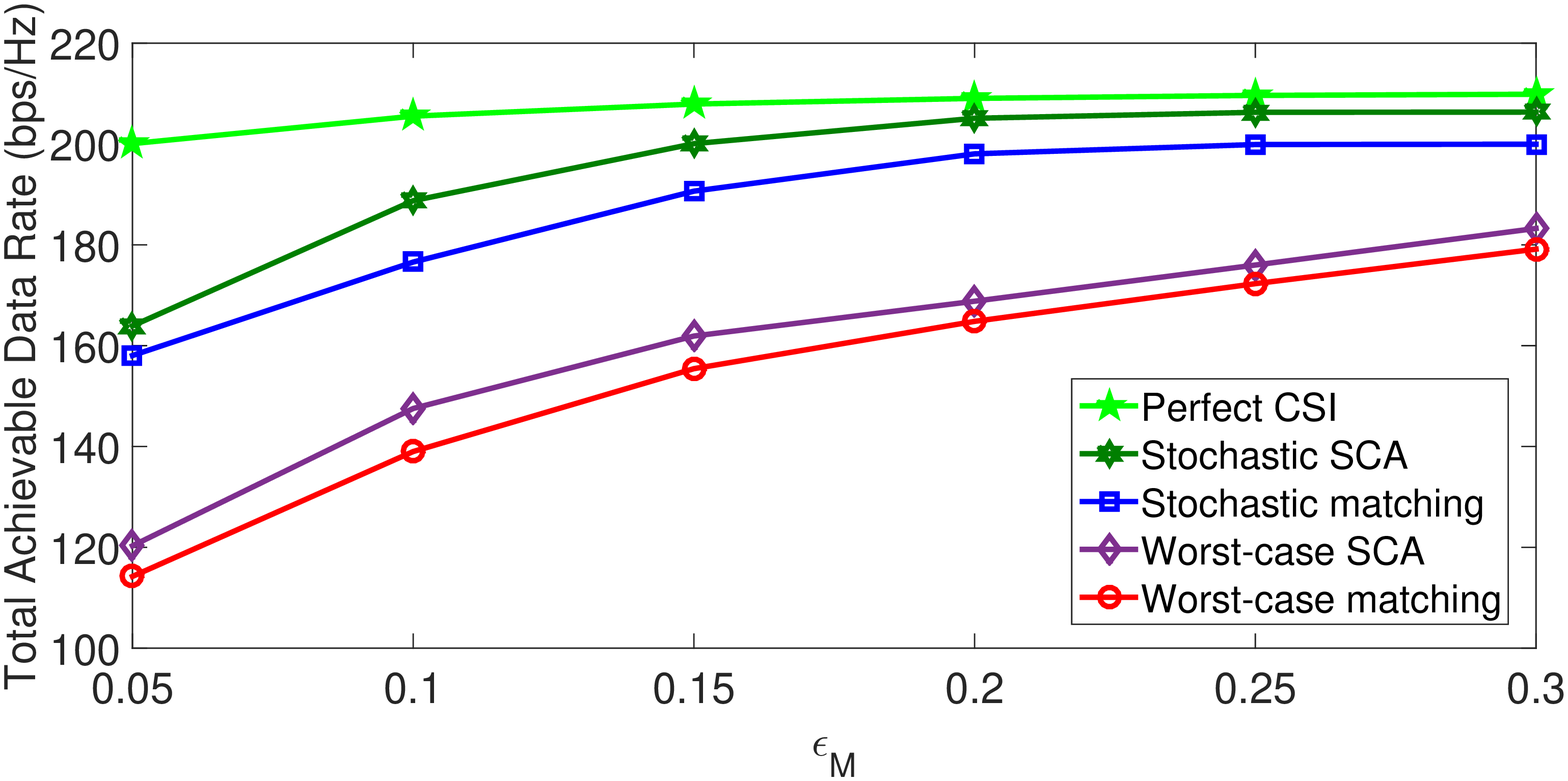}
	\caption{Comparison performance between the employed robustness methods for achievable sum rate versus $ \epsilon_M $.}
	\label{fig:bernvsworstcase1l}
\end{figure}

\begin{figure}
	\centering
	\includegraphics[width=9 cm, height =6 cm]{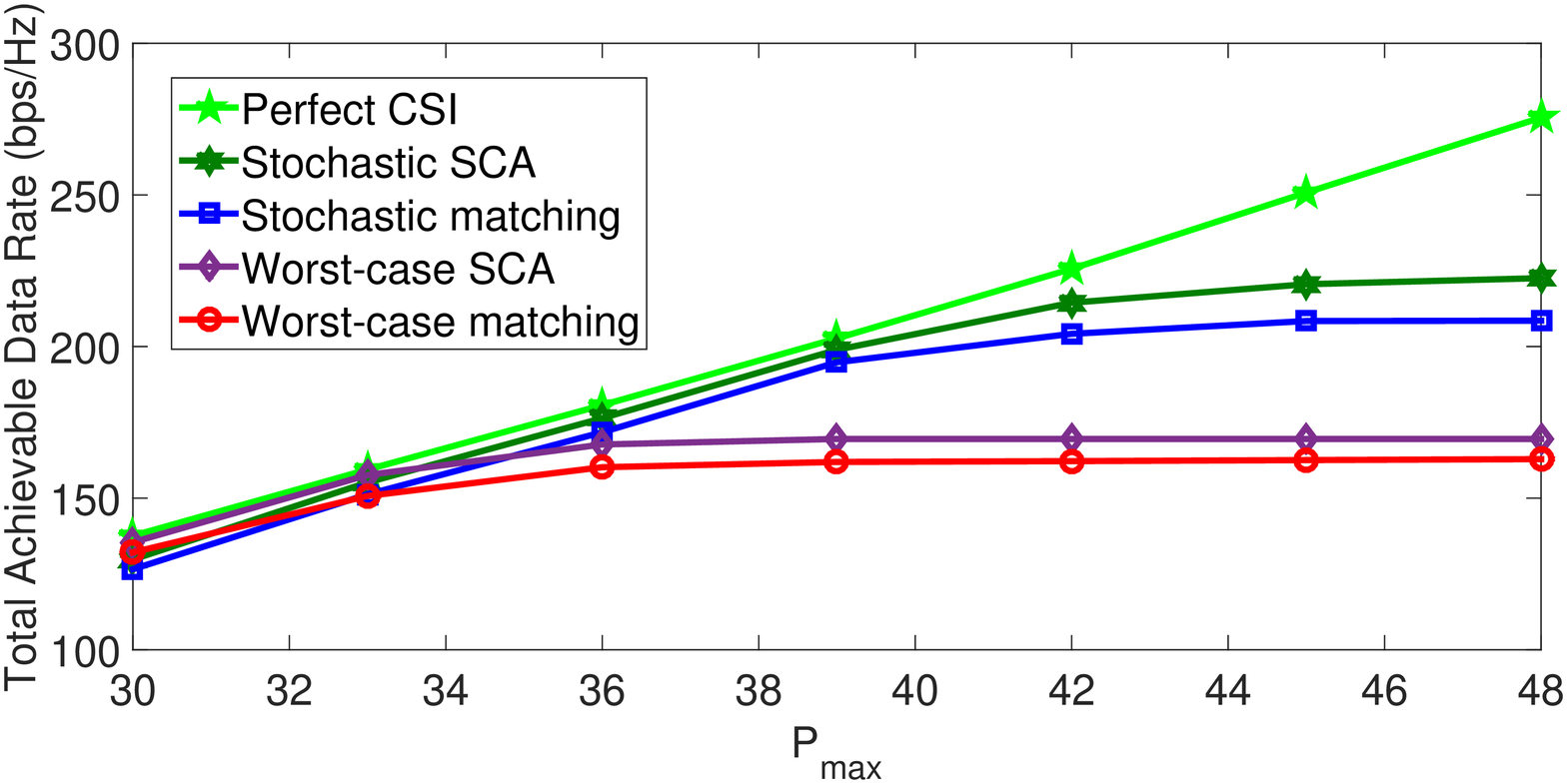}
	\caption{Comparison performance between the employed robustness methods for achievable sum rate versus different  $ P_\text{max} $.}
	\label{fig:bernvsworstcase1r}
\end{figure}
 
 In this section, we evaluate the performance and complexity of the proposed algorithms. 
  \subsection{Simulation setup}
We consider a HetNet with 3 FBSs where  the $f^{th}$ FBS is equipped by 2 antennas and $F_\text{max}$ as the descriptive parameter for the maximum allowable number of cooperative femtocells and receivers on the same subcarrier is preset 3 where the number of subcarriers is $ N_C=10 $. The order of MISO is specified, i.e., $T_f = 2$ and $T_m = 8$.  We consider that $ \sigma_{k,n}^{2}=10^{-4} $ for all results and $ \epsilon_M=0.2 $. For simplicity, We set the variance of channel coefficients and errors as  $ \delta^2_{\textbf{h},\mathcal{F}_{k,n}}=1 $, $ \delta^2_{\textbf{h},\mathcal{FM}_{k,n}}=\delta^2_{\textbf{h},\mathcal{MF}_{n}}=0.05 $ and  $ \delta^2_{\textbf{e},\mathcal{F}_{k,n}}= \delta^2_{\textbf{e},\mathcal{FM}_{k,n}}=\delta^2_{\textbf{e},\mathcal{MF}_{n}}=0.001$ which are equals for all of the users. Robustness parameters assumed $ \zeta=\kappa=0.05 $, $ \eta=0.2 $ and $ \alpha=\beta=0.2 $.
 Moreover, the target of achievable data rate is assumed $ R_k=0.3 \:(\text{bps/Hz}) $ and the maximum power budget of the FBSs sets $ 40 \:\text{dBm} $. For matching algorithms, we select $  \varUpsilon_{CS}=\varUpsilon_{CA}=100 $, $c_{\mathcal{FM}}^n=5$ and $c_i^n=0.2$. 
 \subsection{Simulation results}
  Fig. \ref{fig:bernvsworstcase1l} shows the achievable sum rate versus $ \epsilon_M $.  As expected, applying
  robustness methods decreases achievable data rate as a robustness cost which is rational against perfect CSI conditions. 
 We employ the loosely bounded robust solution as a worst-case method to ensure  the performance of the network specially in critical conditions. It is clear that by increasing the $ \epsilon_M $, the sum rate is increased in all of the employed methods. Further, the achievable sum rate in the Bernstein approach is more than the other one and the performance of devised Bernstein method is near the ideal case. 
 Moreover, Fig. \ref{fig:bernvsworstcase1l} shows the proposed SCA method has more performance in comparison with the proposed matching method.
\begin{figure}
	\centering
	\includegraphics[width=9 cm, height =6 cm]{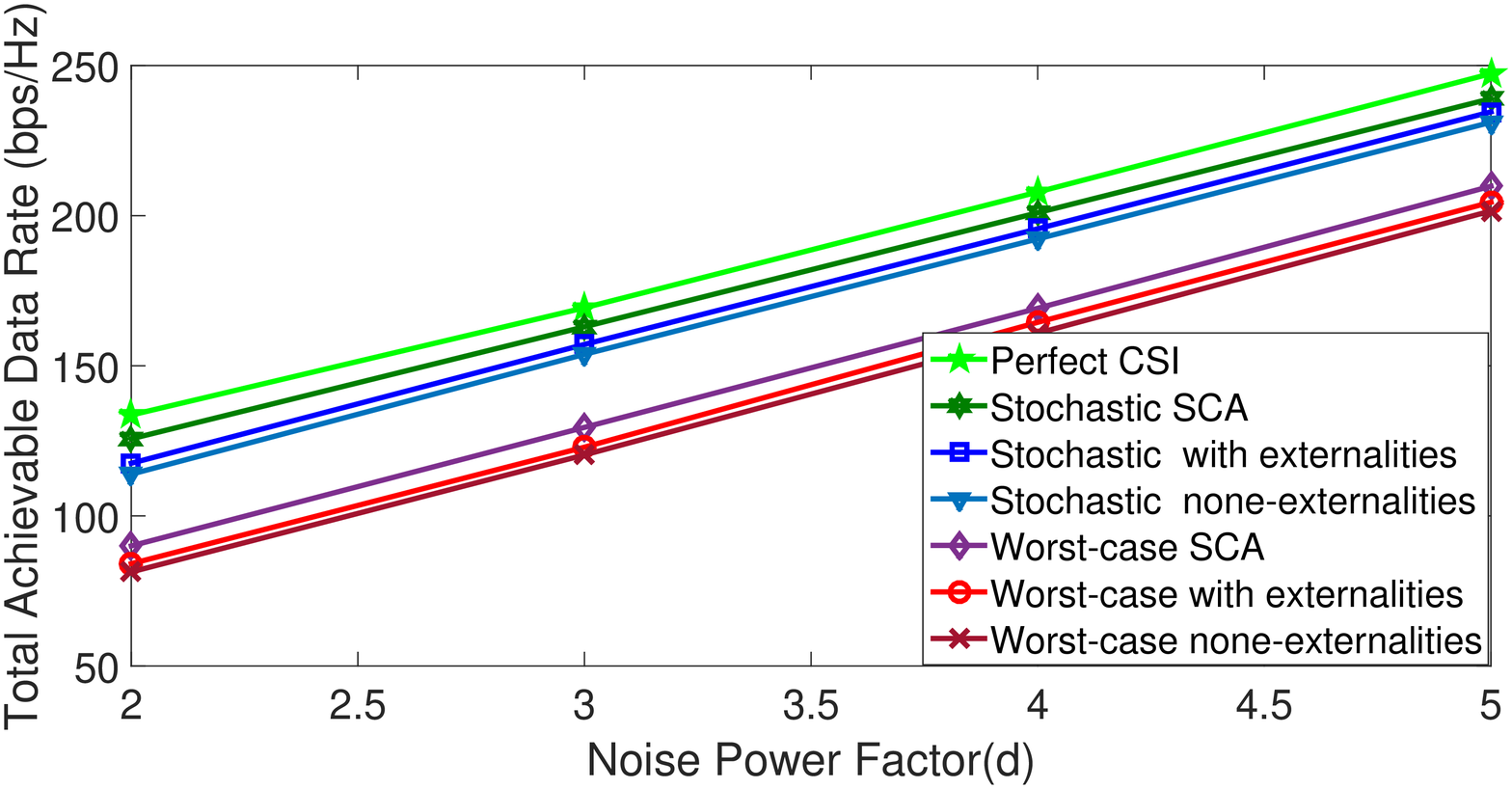}
	\caption{Comparison performance for achievable sum rate versus different noise powers.}
	\label{fig:fmaxl}
\end{figure}

Fig. \ref{fig:bernvsworstcase1r} demonstrates the total achievable data rate versus the different $ P_\text{max} $.
By increasing the power budget of each FBS, the more sum rate can be achieved, but the incremental process is limited in higher values of $ P_\text{max} $. Specially for worst-case approach, this limitation is strict. 
 The performance of the stochastic approaches can achieve to high sum rate with low robustness cost as power budget.
Fig. \ref{fig:fmaxl}  shows the performance of the devised methods for different noise power factors (d) where $ \sigma^2=10^{-d} $. The throughput is increased in the low noise power situations. Moreover, assuming the swap-matching algorithm increases the performance of the matching method.

 To compare sensitivity of the proposed methods, we compute achievable sum rate versus variable robustness parameters. We consider that the robustness parameters of deterministic and probabilistic approach, i.e.,
 $ \eta $ and $ \alpha $ are variable, i.e.,
 each point of horizontal axis in Fig. \ref{betal} 
 represents $ \alpha $ and $ \eta $ so that
  $ \alpha = \eta $.
  The figure shows that although
   increasing $ \alpha $ increases the achievable sum rate, this increment is not too much against another one. Actually, variation of $ \eta $ has a great impact on results and the more $ \eta $ increases, the more sum rate are attained in worst-case. Fig. \ref{betal} illustrates that the performance of the worst-case method is related to the allowable error bounds. Therefore, worst-case is an unreliable
   and sensitive method. The shape of the ellipsoid that we choose in the worst-case method has a deep influence on results. Little $ \eta $ gives more achievable data rate though the model of CSI uncertainty set may be inaccurate and unreliable. Moreover, in Fig. \ref{betal},  comparison of the performance between PD-NOMA and OFDMA based networks is presented.  
    For the feasible initialized state of Fig. \ref{betar}, the beamforming vectors are determined with the introduced worst-case problem in  \cite{atefeh2016}.
     Fig. \ref{betar}  shows the number of required iterations to
   achieve convergence for $ \sigma_{k,n}^{2}=10^{-2} $. Moreover, comparison between the stochastic and worst-case approaches shows that the computational complexity of the stochastic Bernstein approach is more than worst-case method.

\begin{figure}
	\centering
	\includegraphics[width=9 cm, height =6 cm]{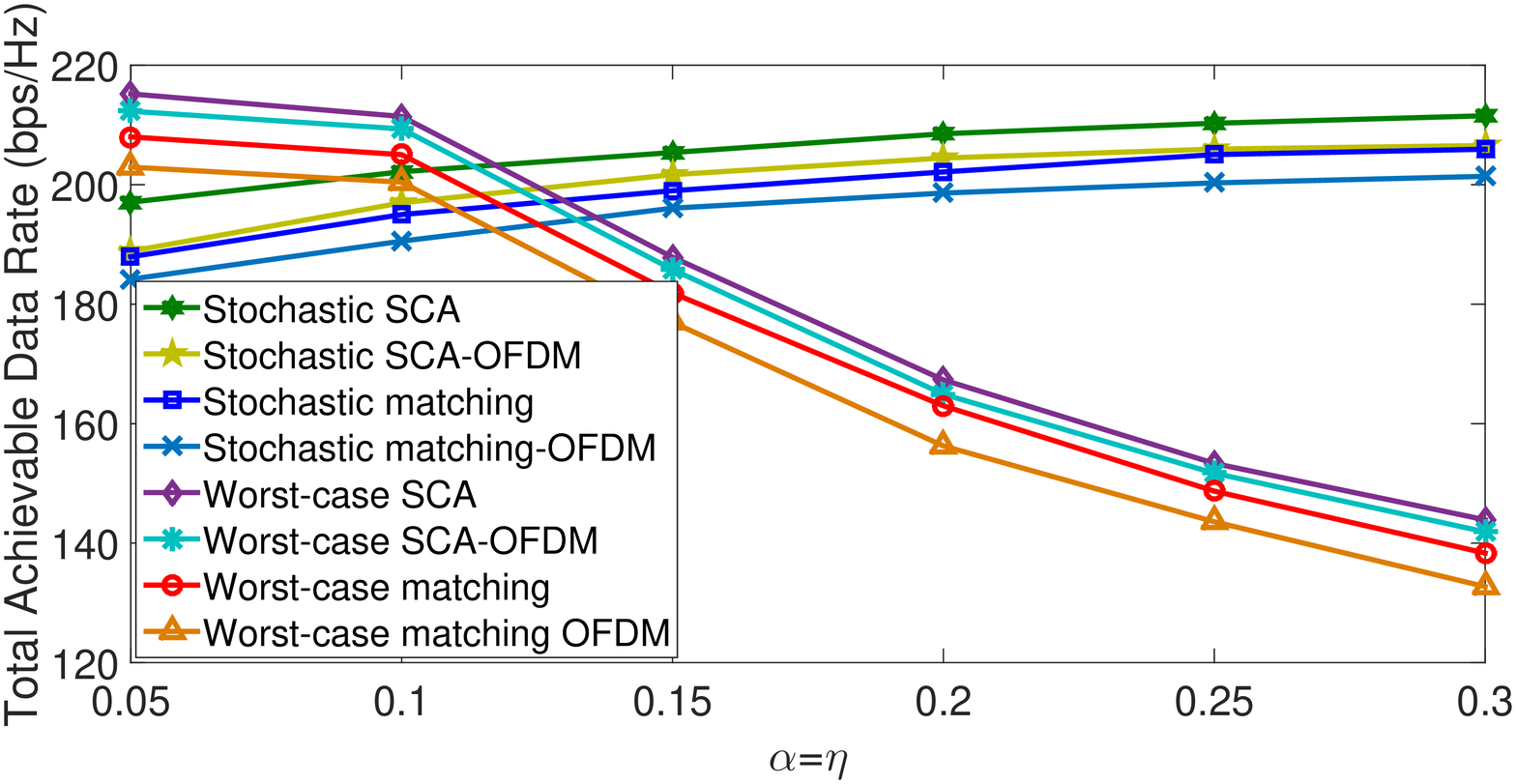}
	\caption{Comparison performance between the employed robustness methods for achievable sum rate versus different robustness factors.}
	\label{betal}
\end{figure}

\begin{figure}
	\centering
	\includegraphics[width=9 cm, height =6 cm]{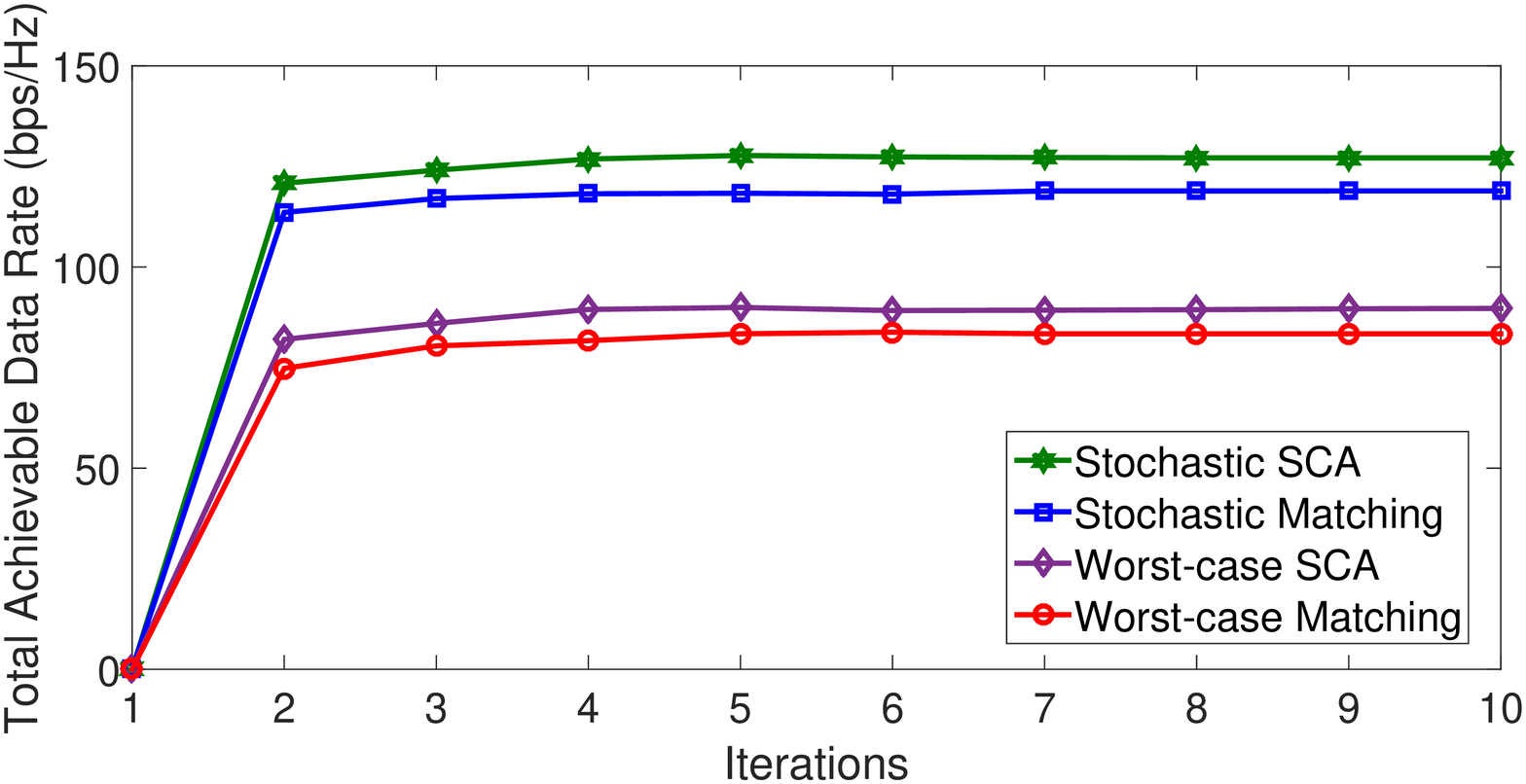}
	\caption{ Convergence of
	the proposed solutions.}
	\label{betar}
\end{figure}
\section{Conclusion}\label{secconclusion}
In this paper, a 5G  network which works based on hybrid category of CoMP technology was considered. We proposed some new advanced methods for the cooperative nodes association and subcarrier allocation problem based on matching game with externalities and SCA for this network. Further, due to the uncertainty of CSIT,  two robustness schemes were considered, and based on them, two methods were proposed  to ensure user's satisfaction in both worst-case and stochastic cases. Because of the high signaling, we devised the worst-case method which is known as a low complexity method. Moreover, we proposed a probabilistic methods based on Bernstein inequality. As expected, by increasing the  users demand, more resources must be allocated to satisfied them. 
 To achieve high capacity, the Bernstein approach needs less robustness cost as transmit power compared with the  worst-case methods.
We also recognized that the computational complexity of the  worst-case method is lower than that of the probabilistic approach. Therefore, we can choose either the worst-case or B‌ernstein approach based on our preference and facilities.  As a future work, we will study the beamforming problem for MBS to increase the achievable data rate for MUEs.
\appendix

\begin{proof}[Proof of the Lemma \ref{lemma1}]
 In the CTNSA Algorithm, the stable matching $ \mu_{CS} $ based on new stable group in each iteration and the reject/accept operation guarantee that in $ (l+1)^{th} $ time 
$ \Sigma_{a\in \mathcal{A}}\Sigma_{k\in \mathcal{K}}\varphi_{CS}^a(k)^{(l+1)}\geq \Sigma_{a\in \mathcal{A}}\Sigma_{k\in \mathcal{K}}\varphi_{CS}^a(k)^{(l)} $ which shows matching function is a non-decreasing function. To express more details, we remark that we employ equations \eqref{ukcs} and \eqref{uacs} instead of equations \eqref{eq:2b} and \eqref{eq:2c}.  The utility functions $\varphi_{CS}^k(a) $ and $ \varphi_{CS}^a(k) $ are compatible with increasing of the
$ \frac{|\bar{h}_{\mathcal{F}_{k,n}}^a\hat{w}_{k,n}^a|^2}{2^{R_k}-1} $ for fixed subcarrier and beamforming design.
Also, by introducing $ \varTheta_{k,a} $ in utility function $  \varphi_{CS}^a(k)$, it can be guaranteed that the interferences of the SINR function can be minimized.  In accordance with these utilities,  the SINR function and also the total data rate are increased too.
Therefore, the proposed CTNSA algorithm converges to a local optimal solution for problem \eqref{CTNSA}. As the CTNSA Algorithm, \eqref{ukcs1} has a direct effect on the data rate function and introducing \eqref{uacs11}  decreases the total interferences which is compatible with the overal throughput of the network, the eCA Algorithm converges to its maximal value which is a local optimal solution.
\end{proof}

\begin{proof}[Proof of the Lemma \ref{lemma2}]
          We aim to prove that each of the proposed robust approaches with the DC approximation converges to its local optimum. The objective functions of \eqref{eq:40} and \eqref{eq:finberns1} are approximated by the DC approach where $ g_{k,n} $ is concave and $ \nabla g_{k,n}(\textbf{W}_{k,n}) $ as its gradient is also
      its super-gradient\cite{rconv1}. Hence, in iteration $ l $, we have
       \begin{align}\label{conv1}
           & g_{k,n}(\textbf{W}^{[l]}_{k,n})\leq g_{k,n}(\textbf{W}^{[l-1]}_{k,n})+\nonumber \\ & \langle \nabla g_{k,n}(\textbf{W}^{[l-1]}_{k,n}),(\textbf{W}^{[l]}_{k,n}-\textbf{W}^{[l-1]}_{k,n})\rangle.
          \end{align}
        As the objective is equal to \eqref{dfg}, we have
         \begin{align}\label{conv2}
      & f_{k,n}(\textbf{W}^{[l]}_{k,n})-g_{k,n}(\textbf{W}^{[l]}_{k,n})\geq \nonumber \\ & f_{k,n}(\textbf{W}^{[l]}_{k,n})-[g_{k,n}(\textbf{W}^{[l-1]}_{k,n})+\nonumber \\ & \langle \nabla g_{k,n}(\textbf{W}^{[l-1]}_{k,n}),(\textbf{W}^{[l]}_{k,n}-\textbf{W}^{[l-1]}_{k,n})\rangle]\geq \nonumber \\
         & f_{k,n}(\textbf{W}^{[l-1]}_{k,n})-g_{k,n}(\textbf{W}^{[l-1]}_{k,n}).
         \end{align}
        The incremental process for the objective of subcarrier allocation in the proposed SCA method can be guaranteed as \eqref{conv2}.  Due to \eqref{conv2}, after iteration $ l $, the objectives of \eqref{eq:40} and \eqref{eq:finberns1} improve against previous
      solution or is almost equal. As mentioned in \cite{rconv2}, the SCA approach with the DC
     approximation is guaranteed to converge to a local optimum. Consequently, we have 
           \begin{align}\label{conv3}
           & r_{k,n}(\boldsymbol{\varrho}^{[l-1]}_{k,n},\textbf{W}^{[l-1]}_{k,n})\leq   r_{k,n}(\boldsymbol{\varrho}^{[l-1]}_{k,n},\textbf{W}^{[l]}_{k,n})\leq 
              r_{k,n}(\boldsymbol{\varrho}^{[l]}_{k,n},\textbf{W}^{[l]}_{k,n}).
         \end{align}
            Based on \eqref{conv2}, the proposed solution at the end of each iteration is better than the previous iteration and for a finite set of transmit powers and channel gains, the optimal achievable sum rate is bounded abov. Thus, the procedure of improving the solutions always converges.
\end{proof}

\clearpage
\end{document}